\def\vecsign{\mathchar"017E}
\def\dvecsign{\smash{\stackon[-1.95pt]{\vecsign}{\rotatebox{180}{$\vecsign$}}}}
\def\dvec#1{\def\useanchorwidth{T}\stackon[-4.2pt]{#1}{\,\dvecsign}}
\theoremstyle{definition}
\newtheorem*{lemma*}{Lemma}
\newtheorem*{theorem*}{Theorem}
\newtheorem{invariant}{Invariant}
\newcommand{\calA}{\mathcal{A}}
\newcommand{\calC}{\mathcal{C}}
\newcommand{\Fbb}{\mathbb{F}}
\newcommand{\N}{\mathbb{N}}
\newcommand{\adom}{\textnormal{adom}}
\title{Dynamic Meta-theorems for Distance and Matching} 
\author{Samir Datta}{Chennai Mathematical Institute, Chennai, India}{sdatta@cmi.ac.in}{}{Partially funded by a grant from Infosys foundation and
SERB-MATRICS grant MTR/2017/000480}
\author{Chetan Gupta}{Aalto University, Finland}{chgpt.09@gmail.com}{}{Supported by Academy of Finland, Grant 321901}
\author{Rahul Jain}{Fernuniversit\"at in Hagen, Germany}{rahul.jain@fernuni-hagen.de}{}{}
\author{Anish Mukherjee}{Institute of Informatics, University of Warsaw, Poland}{anish@mimuw.edu.pl}{}{Supported by the ERC CoG grant TUgbOAT no 772346}
\author{Vimal Raj Sharma}{Indian Institute of Technology, Kanpur, India}{vimalraj@cse.iitk.ac.in}{}{Ministry of Electronics and IT, India, VVS PhD program}
\author{Raghunath Tewari}{Indian Institute of Technology, Kanpur, India}{rtewari@cse.iitk.ac.in}{}{Young Faculty Research Fellowship, Ministry of Electronics and IT, India}
\authorrunning{S. Datta, C. Gupta, R. Jain, A. Mukherjee, V.R. Sharma, R. Tewari} 
\keywords{Distance, Matching, Dynamic Complexity, Non-zero Circulation} 
\newclass{\ReachUL}{ReachUL}
\newclass{\coUL}{coUL}
\newlang{\NZCL}{NonZeroCircL}
\newlang{\NZCNC}{NonZeroCircNC}
\newclass{\Log}{L}
\newclass{\ACz}{AC^0}
\newclass{\TCz}{TC^0}
\newclass{\ACo}{AC^1}
\newclass{\ACzt}{AC^0[\oplus]}
\newclass{\FOar}{FO(\le,+,\times)}
\newclass{\FOpar}{FO[\oplus](\le,+,\times)}
\newclass{\DynACz}{DynAC^0}
\newclass{\DynTCz}{DynTC^0}
\newclass{\DynACzt}{DynAC^0[\oplus]}
\renewclass{\DynFO}{DynFO}
\newclass{\DynFOar}{DynFO(\le,+,\times)}
\newclass{\DynFOp}{DynFO[\oplus]}
\newclass{\DynFOpar}{DynFO[\oplus](\le,+,\times)}
\newclass{\DynFOmar}{DynFO[MAJ](\le,+,\times)}
\newlang{\PM}{PM}
\newlang{\BPM}{BPM}
\newlang{\PMD}{PMDecision}
\newlang{\BPMD}{BPMDecision}
\newlang{\PMS}{PMSearch}
\newlang{\BPMS}{BPMSearch}
\newlang{\BMWPMS}{MinWtBPMSearch}
\newlang{\MCM}{MM}
\newlang{\BMCM}{BMM}
\newlang{\BMCMD}{BMMDecision}
\newlang{\BMCMS}{BMMSearch}
\newlang{\MCMSz}{MMSize}
\newlang{\BMCMSz}{BMMSize}
\newlang{\MWMCM}{MinWtMM}
\newlang{\BMWMCM}{MinWtBMM}
\newlang{\BMWMCMS}{MinWtBMMSearch}
\newlang{\Reach}{Reach}
\newlang{\Dist}{Distance}
\newlang{\Rank}{Rank}
\begin{document}
\maketitle

\begin{abstract}
Reachability, distance, and matching are some of the most fundamental graph problems that have been of particular interest in dynamic complexity theory in recent years \cite{DKMSZ18, DMVZ18, DKMTVZ20}. Reachability can be maintained with first-order update formulas, or equivalently in $\DynFO$ in general graphs with $n$ nodes~\cite{DKMSZ18}, even under $O(\frac{\log{n}}{\log{\log{n}}})$ changes per step \cite{DMVZ18}. In the context of how large the number of changes can be handled, it has recently been shown \cite{DKMTVZ20} that under a polylogarithmic number of changes, reachability is in $\DynFOpar$ in planar, bounded treewidth, and related graph classes -- in fact in any graph where small \emph{non-zero circulation weights} can be computed in \NC. 

We continue this line of investigation and extend the meta-theorem for reachability to distance and bipartite maximum matching with the same bounds. These are amongst the most general classes of graphs known where we can maintain these problems deterministically without using a majority quantifier and even maintain witnesses. For the bipartite matching result, modifying the approach from \cite{FGT}, we convert the static non-zero circulation weights to dynamic matching-isolating weights. 

While reachability is in $\DynFOar$ under $O(\frac{\log{n}}{\log{\log{n}}})$ changes, no such bound is known for either distance or matching in any non-trivial class of graphs under non-constant changes. We show that, in the same classes of graphs as before, bipartite maximum matching is in $\DynFOar$ under $O(\frac{\log{n}}{\log{\log{n}}})$ changes per step. En route to showing this we prove that the rank of a matrix can be maintained in $\DynFOar$, also under $O(\frac{\log{n}}{\log{\log{n}}})$ entry changes, improving upon the previous $O(1)$ bound \cite{DKMSZ18}. This implies similar extension for the non-uniform $\DynFO$ bound for maximum matching in general graphs and an alternate algorithm for maintaining reachability under $O(\frac{\log{n}}{\log{\log{n}}})$ changes \cite{DMVZ18}.
\end{abstract}
\section{Introduction}
In real-life scenarios,  many situations involve an evolving input where parts of the data change frequently. Recomputing everything from scratch for these large datasets after every change is not an efficient option, and the goal is to dynamically maintain some auxiliary data structure to help us recompute the results quickly.  

The dynamic complexity framework of Patnaik and Immerman~\cite{PatnaikI} is one such approach that has its roots in descriptive complexity \cite{Immerman} and is closely related to the setting of Dong,  Su,  and Topor~\cite{DongST95}.  For example,  by maintaining some auxiliary relations,  the reachability relation can be updated after every single edge modification using first-order logic formulas~\cite{DKMSZ18} i.e., the reachability query is contained in the dynamic complexity class $\DynFO$~\cite{PatnaikI}.  The motivation to use first-order logic as the update method has connections to other areas.  From the circuit complexity perspective,  this implies that such queries are highly parallelizable, i.e., can be updated by polynomial-size circuits in constant-time due to the correspondence between $\FO$ and uniform $\ACz$ circuits \cite{BIS}. From the perspective of database theory, such a program can be translated into equivalent \textsc{SQL} queries.

The area has seen renewed interest in proving further upper bounds results, partly after the resolution of the long-standing conjecture~\cite{PatnaikI} that reachability is in $\DynFO$ under single edge modifications~\cite{DKMSZ18}. A natural direction to extend this result is to see which other fundamental graph problems also admit such efficient dynamic programs. The closely related problems of maintaining distance and matching are two such examples, though a $\DynFO$ bound for these problems in general graphs has been elusive so far. The best known bound for distance is $\DynTCz$~\cite{Hesse03} and non-uniform $\DynACzt$ \cite{DHK}.  Here, the updates are computed in $\FO$ formulas with majority quantifiers (uniform $\TCz$ circuits) and non-uniform $\FO$ formulas with parity quantifiers ($\ACzt$ circuits), respectively. For matching, we have a non-uniform $\DynFO$ bound for maintaining the size of the maximum matching~\cite{DKMSZ18}. The only non-trivial class of graphs where both these problems are in $\DynFO$ is bounded treewidth graphs~\cite{DMSVZ19}. 

At the same time progress has been made to understand how large a modification to the input can be handled by similar dynamic programs. It is of particular interest since, in applications, changes to a graph often come as a bulk set of edges. It was shown that reachability can be maintained in $\DynFOar$ under changes of size $O(\log n/ \log \log n)$~\cite{DMVZ18} in graphs with $n$ nodes.  Here, the class $\DynFOar$ extends $\DynFO$ by access to built-in arithmetic, which is more natural for bulk changes. To handle larger changes, it is known that even for reachability, changes of size larger than polylogarithmic cannot be handled in $\DynFO$~\cite{DKMTVZ20}. And for changes of polylogarithmic size, the previous techniques seem to require extending $\DynFO$ by majority quantifiers~\cite{DMVZ18}. Under bulk changes of polylog($n$) size, we can even maintain distance and the size of a maximum matching in the uniform and the non-uniform version of $\DynTCz$, respectively~\cite{DMVZ18, Mukherjee}. 

Making further progress in this direction, recently it has been shown in \cite{DKMTVZ20} that reachability is in $\DynFOpar$ (i.e., update formulas may use parity quantifiers) under polylog($n$) changes in the class of graphs where polynomially bounded \emph{non-zero circulation weights} can be computed statically in the parallel complexity class \NC. A weight function for the edges of a graph has non-zero circulation if the (alternate) sum of the weight of every directed cycle is non-zero (see Section~\ref{sec:prelims} for more details). Planar \cite{TV12}, bounded genus \cite{DKTV11}, bounded treewidth graphs \cite{DKMTVZ20} are some of the well-studied graph classes for which such non-zero circulation weights can be computed in \NC. 

In this work, we first extend this result to prove similar meta-theorems for maintaining distance (including a shortest path witness) and the \emph{search} version of minimum weight bipartite maximum matching ($\BMWMCMS$) in the same class of graphs. 
\begin{theorem}\label{thm:dynFO2}
$\Dist$ and $\BMWMCMS$ are in $\DynFOpar$ under polylog$(n)$ edge changes on classes of graphs where non-zero circulation weights can be computed in \NC.
\end{theorem}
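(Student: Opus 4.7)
The plan is to reduce both problems to a single algebraic primitive that the machinery of \cite{DKMTVZ20} already maintains dynamically, namely the inverse (or equivalently the adjoint together with the determinant) of a symbolic matrix over a polynomial ring, kept current under polylog$(n)$-many rank-one updates by the Sherman--Morrison--Woodbury identity
\[
(M + UV^\top)^{-1} = M^{-1} - M^{-1} U (I + V^\top M^{-1} U)^{-1} V^\top M^{-1}.
\]
The correction term only involves inverting a polylog$\times$polylog matrix over $\Fbb_2[x]/(x^{n^c})$, which for polylog-degree polynomials is expressible in $\FOpar$ by the same argument already used to obtain the reachability result of \cite{DKMTVZ20}.

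For \Dist, I would use the statically \NC-computed non-zero circulation weights $w$ to form the matrix $A(x)$ over $\Fbb_2[x]/(x^{n^c})$ whose $(u,v)$-entry is $x^{w(u,v)}$ on an edge and $0$ otherwise. The non-zero circulation property makes contributions from closed walks cancel, so the minimum exponent of $x$ appearing in the $(u,v)$-entry of $(I + A(x))^{-1}$ gives the $w$-weighted distance from $u$ to $v$; a second formal variable tracking edge count (routine in these graph classes) then recovers unweighted distance. Maintaining $(I+A(x))^{-1}$ under polylog edge changes is exactly the SMW task above. A shortest-path witness is extracted by a parallel decrement scan: $u'$ lies on some shortest $u$--$v$ path iff $d(u,u') + d(u',v) = d(u,v)$, a predicate computable directly from the maintained inverse.

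For $\BMWMCMS$ I would combine this dynamic inverse-maintenance with a dynamic adaptation of the \cite{FGT} isolation scheme. The crucial observation is that the non-zero circulation property is hereditary under edge deletion, so a single \NC-computed weighting on the initial (host) graph remains a valid non-zero circulation weighting on every subgraph encountered during the update sequence; feeding it through the \cite{FGT} lifting then produces a weight function that simultaneously isolates a unique minimum-weight perfect matching in each such subgraph. This is the static-to-dynamic conversion of isolating weights promised in the abstract. Once isolation is in force, the value and the characteristic vector of the optimal matching can be read off the determinant and the adjoint of the Edmonds matrix $E(x)$ over $\Fbb_2[x]$, both of which we maintain via the SMW updates above; the search version is obtained by parallel inspection of the relevant adjoint entries. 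The main obstacle will be this dynamic isolation argument for matching: we must verify that the hereditary-circulation-to-isolating-weights reduction really does dovetail with polylog-rank updates of $E(x)$, in the sense that the specific adjoint entries used to decode the unique matching are precisely the ones being kept current, and that all the polynomial-ring arithmetic underlying every step is indeed carried out inside $\FOpar$ throughout the edit sequence.
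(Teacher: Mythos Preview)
Your proposal has the right high-level architecture (maintain a symbolic inverse via Sherman--Morrison--Woodbury, read off the answer from degrees/coefficients) but it misses the central technical step of the paper, and as written it would not go through.

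The gap is in how you handle \emph{insertions}. You say that ``the non-zero circulation property is hereditary under edge deletion, so a single \NC-computed weighting on the initial (host) graph remains a valid non-zero circulation weighting on every subgraph encountered during the update sequence''. That is true for deletions but says nothing about insertions, and insertions are the entire difficulty: a newly inserted edge can create cycles for which the static circulation weights give circulation zero, so isolation is lost. Your suggestion of ``feeding it through the \cite{FGT} lifting'' does not fix this either, because FGT applied to an $n$-edge graph gives quasipolynomial weights, which are too large to sit inside $\Fbb_2[x]/(x^{n^c})$. The paper's key lemma (Lemma~\ref{lem:combFGT}) resolves this by applying the FGT recipe \emph{only to the $N = \mathrm{polylog}(n)$ newly inserted edges}, yielding weights of magnitude $N^{O(\log N)} = n^{o(1)}$, then shifting these to higher-order bits above the untouched static circulation weights on the old edges. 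This produces polynomially many candidate weight functions (one per choice of the $O(\log N)$ primes), at least one of which is isolating; all instances are maintained in parallel. You also omit the Muddling Lemma (Lemma~\ref{lem:muddle}), which is what lets the paper recompute the static \NC\ circulation weights periodically and thereby only ever absorb a \emph{single} batch of polylog changes on top of a freshly isolated graph; without it, the FGT-weighted ``new'' edge set would grow unboundedly over the update sequence.

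There are two further matching-specific gaps. First, $\BMWMCMS$ asks for a \emph{maximum} matching, not a perfect one; the paper reduces to a generalized perfect matching by adding a pendant edge $(v,t_v)$ and a self-loop at $t_v$ for every vertex, and uses a three-tier weight $\langle w',w'',w'''\rangle$ where only the lowest tier is the isolating weight (the upper two tiers force maximum cardinality and fix the set of matched vertices). Second, the Edmonds/Tutte matrix is singular whenever no perfect matching exists, so SMW does not apply as stated; the paper forces invertibility by placing $x^{w_\infty}$ on the diagonal (Lemma~\ref{lem:genTutte}), so that the highest-degree monomial of the determinant is always the diagonal product. Finally, for \Dist, the mechanism is not that ``closed walks cancel'' under non-zero circulation; rather, non-zero circulation guarantees a \emph{unique} minimum-weight $s$--$t$ walk, so the minimum-degree coefficient in the $(s,t)$-entry is $1$ and hence survives reduction $\bmod~2$.
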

Note that these are the only classes of graphs known where we can maintain both these problems \emph{deterministically} without using a majority quantifier and even maintain a witness to the solution. 

While reachability can be maintained in $\DynFOar$ under bulk changes even in general graphs, no such bound is known for either distance or matching in any non-trivial class of graphs under a non-constant number of changes. Extending the dynamic rank algorithm of \cite{DKMSZ18} to handle bulk changes (Theorem~\ref{thm:rank}), we show a similar meta theorem for maintaining the size of a bipartite maximum matching ($\BMCMSz$) in $\DynFOar$ under slightly sublogarithmic bulk changes, in the same class of graphs. 
\begin{theorem}\label{thm:matchFO}
$\BMCMSz$ is in $\DynFOar$ under $O(\frac{\log{n}}{\log{\log{n}}})$ edge changes on classes of graphs for which non-zero circulation weights can be computed in \NC.
\end{theorem}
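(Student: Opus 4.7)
The plan is to reduce $\BMCMSz$ maintenance to dynamic matrix rank maintenance, for which \Cref{thm:rank} already provides the $\DynFOar$ algorithm in the required bulk-change regime. The reduction uses non-zero circulation weights to derandomise the Edmonds matrix and obtain a numerical matrix whose rank equals the matching size.

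For a bipartite graph $G = (U \cup V, E)$ (with $U$ and $V$ disjoint) together with a polynomially bounded edge weight function $w$ of non-zero circulation, define $M \in \mathbb{Z}^{U \times V}$ by $M[u,v] = 2^{w(u,v)}$ if $(u,v) \in E$ and $M[u,v] = 0$ otherwise. Two distinct matchings of the same size in $G$ differ by a disjoint union of alternating cycles, and non-zero circulation ensures that their weight difference is non-zero; hence the minimum-weight matching of each cardinality is unique, and the corresponding minor of $M$ does not vanish. By a standard Mulmuley--Vazirani--Vazirani style argument this gives $\operatorname{rank}_{\mathbb{Q}}(M) = $ maximum bipartite matching size in $G$. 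The entries of $M$ have $O(\log n)$ bits, comfortably within the setting of \Cref{thm:rank}.

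The dynamic algorithm then proceeds as follows. At initialisation, which permits arbitrary polynomial-time precomputation, I invoke the assumed \NC\ algorithm on the initial graph to produce $w$, build $M$, and initialise the auxiliary relations required by \Cref{thm:rank} on $M$. A bulk update consisting of $k = O(\log n / \log \log n)$ edge insertions/deletions translates into at most $k$ entry changes in $M$: a deleted edge $(u,v)$ sets $M[u,v]$ to $0$, an inserted edge sets it to $2^{w(u,v)}$. This fits the tolerance of \Cref{thm:rank}, whose $\FOar$ update formulas then deliver the new $\operatorname{rank}(M)$, and hence $\BMCMSz$.

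The main obstacle is keeping $w$ a valid non-zero circulation weighting throughout the dynamic process. Deletions are benign since subgraphs inherit non-zero circulation, but insertions can create new directed cycles whose alternating weight sums happen to vanish under the initially fixed $w$. For classes where one can fix an ambient universe graph $H$ in the class at initialisation, compute $w$ on $H$ once via the \NC\ algorithm, and ensure that every dynamically maintained graph is a subgraph of $H$ --- natural for bounded treewidth with a fixed width parameter, and achievable with more care for planar and bounded-genus classes --- the same $w$ remains valid and the reduction goes through directly. For the full generality of the meta-theorem, one either has to supply such a universe construction for each class covered by the hypothesis, or else maintain $w$ itself dynamically while keeping the total number of entry changes induced in $M$ bounded by $O(\log n / \log \log n)$ per bulk update; verifying this stability of the weight function is the delicate structural step a complete proof must address.
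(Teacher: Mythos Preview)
Your overall architecture---reduce $\BMCMSz$ to integer-matrix rank via an isolating weight substitution and then invoke Theorem~\ref{thm:rank}---is the same as the paper's. The rank correspondence you state is correct, and the observation that deletions preserve non-zero circulation is also used in the paper.

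The genuine gap is exactly the one you flag: handling insertions. Neither of your two suggested routes works. The ambient-universe idea fails already for planar graphs: there is no fixed planar graph $H$ on $\mathrm{poly}(n)$ vertices containing every $n$-vertex bipartite planar graph as a subgraph (the complete bipartite graph is the natural universe but is not planar), so you cannot precompute non-zero circulation weights once and for all. And maintaining non-zero circulation weights dynamically is precisely what the paper says is not known---for planar graphs it would require a $\DynFO$ planar-embedding algorithm, which is open (see the discussion at the start of Section~\ref{sec:iso}).

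The paper's actual mechanism is different and is the main technical point of the result. It does \emph{not} maintain non-zero circulation weights. Instead it uses the Muddling Lemma (Lemma~\ref{lem:muddle}): it suffices to maintain the answer for a bounded number of steps starting from freshly computed static non-zero circulation weights (computed in \NC\ during the $\AC^d$ initialisation). Over that bounded window only $\mathrm{polylog}(n)$ edges are inserted in total, and for those few edges the paper overlays FGT-style weights (Lemma~\ref{lem:combFGT}) on top of the static circulation weights; the combined weights are no longer non-zero-circulation but are still \emph{isolating}, which is all the rank correspondence needs. This is what Lemma~\ref{lem:matchtorank} packages, and the proof of Theorem~\ref{thm:matchFO} is then a one-line application of Theorem~\ref{thm:rank} through that lemma. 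Your proposal is missing this Muddling-plus-FGT step, and without it the argument does not close.
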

Previously, no $\DynFOar$ bound was known even in planar graphs under single edge changes. Theorem~\ref{thm:matchFO} also extends the non-uniform $\DynFO$ bound for maintaining the size of maximum matching in \emph{general} graphs from single edge changes \cite{DKMSZ18} to $O(\frac{\log{n}}{\log{\log{n}}})$ many. 

Since reachability reduces to maintaining the rank of a matrix via bounded-expansion first-order (bfo) reductions \cite{DKMSZ18}, Theorem~\ref{thm:rank}  also gives an alternative algorithm for maintaining reachability in $\DynFOar$ under $O(\frac{\log{n}}{\log{\log{n}}})$ changes. This is interesting in its own right as it generalizes the \emph{rank-method} for maintaining reachability \cite{DKMSZ18} even under bulk changes without going via the Sherman-Morrison-Woodbury identity~\cite{DMVZ18}.
%
\begin{table}[h]
\centering
\begin{tabular}{|c|c|c|c|}
\cline{1-4}
\multicolumn{1}{|c|}{Problem} & \multicolumn{3}{|c|}{\#changes}\\
\cline{2-4}
 & $O(1)$ & $O(\frac{\log{n}}{\log{\log{n}}})$ & $\log^{O(1)}{n}$\\
\cline{1-4}
 $\Reach$	& $\DynFO$\ \cite{DKMSZ18} & $\DynFO$\ \cite{DMVZ18} & 
 $\DynFOp$\ \cite{DKMTVZ20} \\
  $\Dist$ 	& \framebox{$\DynFOp$} & \framebox{$\DynFOp$} & \framebox{$\DynFOp$} \\
 $\BMCMSz$	& \framebox{$\DynFO$} & \framebox{$\DynFO$} & \framebox{$\DynFOp$} \\
 $\BMCMS$	&  \framebox{$\DynFOp$} & \framebox{$\DynFOp$} & \framebox{$\DynFOp$} \\
\cline{1-4}
\end{tabular}
\caption{Previously known and \framebox{new} results in graphs with non-zero circulation weights in \NC.}
\end{table}
\vspace*{-8mm}
\subparagraph*{Main Technical Contributions}
There are two major technical contributions of this work:
(1) \emph{Converting the statically computed non-zero circulation weights for
bipartite matchings to dynamically isolating weights for bipartite matchings}. Our main approach (described in detail in Section~\ref{sec:iso}) is to assign polynomially bounded \emph{isolating} weights to the edges of the evolving graph so that the minimum weight solution under these weights is \emph{unique}. While static non-zero circulation weights guarantee this under deletions, for insertions, the dynamization is based on the seminal work of \cite{FGT}. They construct isolating weights for perfect matching for arbitrary bipartite graphs, but which are quasipolynomially large in the size of the graph. By assigning such weights only to the changed part of the graph and carefully combining with the previously assigned weights, we make sure the edge weights remain small as well as isolating throughout, using the Muddling Lemma (see Section~\ref{sec:prelims}). Our construction parallels that of \cite{DKMTVZ20} where dynamic isolating weights for reachability in non-zero circulation graphs were constructed based on the static construction from \cite{KT16}. In addition to extending the reachability result (Theorem~\ref{thm:dynFO2}) this also enables us to prove a $\DynFOar$ bound (Theorem~\ref{thm:matchFO}) for bipartite maximum matching (previously, a rather straightforward application of non-zero circulation weights in planar graphs could only achieve a $\DynFOp$ bound under single edge changes \cite{Mukherjee}). 
2) \emph{Maintaining rank of a matrix under sublogarithmically many changes}. This involves non-trivially extending the technique from \cite{DKMSZ18}, which maintains rank under single entry changes, and combining it with \cite{DMVZ18} which shows how to compute the determinant of a small matrix of dimension $O(\frac{\log{n}}{\log{\log{n}}})$ in $\FOar$. 

\vspace*{-2mm}
\subparagraph*{Organization}
After some preliminaries in Section~\ref{sec:prelims} and Appendix~\ref{sec:prelimsApp}, 
in Section~\ref{sec:iso}, we discuss the connection between dynamic isolation and static non-zero circulation and show its applications for distance and matching in Appendix~\ref{sec:distance} and Section~\ref{sec:ac0p}, respectively.  
In Section~\ref{sec:matchFO}, we describe the $\DynFOar$ algorithm for maximum matching, which is built on
the rank algorithm under bulk changes from Section~\ref{sec:rank}. Finally, we conclude with Section~\ref{sec:concl}.

\section{Preliminaries and Notations}\label{sec:prelims}

%
%
\subparagraph*{Dynamic Complexity}
The goal of a dynamic program is to answer a given query on an \emph{input structure} subjected to insertion or deletion of tuples. The program may use an \emph{auxiliary data structure} over the same domain. Initially, both input and auxiliary structures are empty; and the domain is fixed during each run of the program. 

For a (relational) structure $\mathcal{I}$ over domain $D$ and schema $\sigma$, a change $\Delta \mathcal{I}$ consists of sets $R^{+}$ and $R^{-}$ of tuples for each relation symbol $R \in \sigma$. The result $\mathcal{I} + \Delta \mathcal{I}$ is the input structure where $R^{\mathcal{I}}$ is changed to $(R^{\mathcal{I}} \cup R^{+}) \setminus R^{-}$. The set of affected elements is the (active) domain of tuples in $\Delta \mathcal{I}$. A dynamic program $\mathcal{P}$ is a set of first-order formulas specifying how auxiliary relations are updated after a change. For a state $\mathcal{S} = (\mathcal{I}, \mathcal{A})$ with input structure $\mathcal{I}$ and auxiliary structure $\mathcal{A}$ we denote the state of the program after applying a change sequence $\alpha$ and updating the auxiliary relations accordingly by $\mathcal{P}_\alpha(\mathcal{S})$. 

The dynamic program \emph{maintains} a $q$-ary query $Q$ under changes that affect $k$ elements if it has a $q$-ary auxiliary relation $\textsc{ans}$ that at each point stores the result of $Q$ applied to the current input structure. More precisely, for each non-empty sequence $\alpha$ of changes affecting $k$ elements, the relation $\textsc{ans}$ in $\mathcal{P}_\alpha(\mathcal{S}_\emptyset)$ and $Q(\alpha(\mathcal{I}_\emptyset))$ coincide, where $\mathcal{I}_\emptyset$ is an empty input structure, $\mathcal{S}_\emptyset = (\mathcal{I}_\emptyset, \mathcal{A}_\emptyset)$ where $\mathcal{A}_\emptyset$ denotes the empty auxiliary structure over the domain of $\mathcal{I}_\emptyset$, and $\alpha(\mathcal{I}_\emptyset)$ is the input structure after applying $\alpha$. 

If a dynamic program maintains a query, we say that the query is in \DynFO. Similar to \DynFO\, one can define the class of queries \DynFOar\ that allows for auxiliary relations initialized as a linear order, and the corresponding addition and multiplication relations. One can further extend this class by allowing parity quantifiers to yield the class $\DynFOpar$. As we focus on changes of non-constant size, we include arithmetic in our setting. See \cite{DMVZ18, DKMTVZ20} for more details. 

The \emph{Muddling Lemma}~\cite{DKMTVZ20} states that to maintain many natural queries, it is enough to maintain the query for a bounded number of steps, that we crucially use in this paper. In the following, we first recall the necessary notions before stating the lemma.

A query $Q$ is \emph{almost domain-independent} if there is a $c \in \N$ such that $Q(\calA)[(\adom(\calA) \cup B)] = Q(\calA[(\adom(\calA) \cup B)])$ for all structures $\calA$ and sets $B \subseteq A \setminus \adom(\calA)$ with $|B| \geq c$. Here, $\adom(\calA)$ denotes the \emph{active domain}, the set of elements that are used in some tuple of $\calA$. A query $Q$ is \emph{$(\calC,f)$-maintainable}, for some complexity class $\calC$ and some function~\mbox{$f:\N\to \mathbb{R}$}, if there is a dynamic program $\mathcal{P}$ and a $\calC$-algorithm $\mathbb A$ such that for each input structure $\mathcal{I}$ over a domain of size $n$, each linear order $\leq$ on the domain, and each change sequence $\alpha$ of length $|\alpha| \leq f(n)$, the relation $Q$ in $\mathcal{P}_\alpha(\mathcal{S})$ and $Q(\alpha(\mathcal{I}))$ coincide, where $\mathcal{S} = (\mathcal{I}, \mathbb A(\mathcal{I},\leq))$. $\AC^i$ is the class of problems that can be solved using polynomial-size circuit of $O(\log^i n)$ depth and $\NC\ = \cup_i \AC^i$. 
\begin{lemma}[\cite{DKMTVZ20}]\label{lem:muddle}
Let $Q$ be an almost domain independent query, and let $c \in \N$ be arbitrary. If the query $Q$ is $(\AC^d, 1)$-maintainable under changes of size $\log^{c+d} n$ for some $d \in \N$, then $Q$ is in $\DynFOar$ under changes of size $\log^c n$.
\end{lemma}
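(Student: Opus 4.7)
The plan is to pipeline the $\AC^d$ initialization algorithm $\mathbb{A}$ so that at every $\DynFOar$ update step a freshly re-initialized copy of the given $(\AC^d,1)$-maintainable dynamic program $\calP$ is available to absorb the recent changes and emit $Q$. The enabling fact is that every uniform polynomial-size $\AC^d$ circuit has depth $L:=O(\log^d n)$ with each individual layer computable by a single $\FOar$ formula, by the uniform correspondence $\AC^0 = \FO(\le,+,\times)$. Hence $\mathbb{A}$ applied to a frozen snapshot can be evaluated layer-by-layer, one layer per $\DynFOar$ update step, taking $L$ steps in total; during these $L$ steps at most $L\cdot \log^c n=\log^{c+d} n$ changes to the input accumulate, which is exactly the single-step budget that $\calP$ tolerates.

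Concretely, the dynamic program maintains $L$ pipeline slots $i\in\{0,\dots,L-1\}$, stored as auxiliary relations with an extra slot-index coordinate, whose length-$L$ cycles are staggered by one update step: slot $i$ starts a fresh cycle at every time $t\equiv i\pmod{L}$. At the start of its cycle slot $i$ copies the current input into a snapshot $\calI^i_{\mathrm{snap}}$ and empties a change-buffer $\alpha^i_{\mathrm{buf}}$. Over the next $L-1$ steps it writes out layer $j$ of $\mathbb{A}(\calI^i_{\mathrm{snap}},\le)$ at the $j$-th step of its cycle, using a single $\FOar$ formula parametrized by $j$, and simultaneously appends every incoming change to $\alpha^i_{\mathrm{buf}}$. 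After $L$ steps the full auxiliary structure $\mathbb{A}(\calI^i_{\mathrm{snap}},\le)$ is present and $|\alpha^i_{\mathrm{buf}}|\le \log^{c+d} n$; slot $i$ then invokes $\calP$'s update formulas on the state $(\calI^i_{\mathrm{snap}},\mathbb{A}(\calI^i_{\mathrm{snap}},\le))$ with the single change $\alpha^i_{\mathrm{buf}}$. The $(\AC^d,1)$-maintainability hypothesis guarantees the resulting relation equals $Q(\calI_t)$ for the current input $\calI_t = \alpha^i_{\mathrm{buf}}(\calI^i_{\mathrm{snap}})$, and this is copied into the global answer relation \textsc{ans}. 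Because the cycles are staggered, exactly one slot finishes its cycle at every $t\ge L$, keeping \textsc{ans} current.

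The main obstacle is fitting this protocol into fixed $\FOar$ update formulas independent of $n$. Phase counters and layer indices are numbers of magnitude $O(\log^d n)$ representable with built-in arithmetic; a single $\FOar$ formula parametrized by the layer number evaluates any requested layer of $\mathbb{A}$, using the uniformity of $\AC^d$; and snapshots, buffers, intermediate circuit outputs, and $\calP$'s own auxiliary relations all acquire an extra slot-index coordinate, keeping the total auxiliary data polynomial. The almost-domain-independence hypothesis is used to reconcile the fact that a snapshot taken $L$ steps ago may live on a strictly smaller active domain than the current one: padding by a constant number of dummy elements, which can be done once in the static pre-processing, makes the value of $Q$ insensitive to the exact current domain, which also resolves the $L$-step bootstrap window since the initial input structure is empty and $\mathbb{A}$ on it is trivial. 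Assembling these pieces yields a $\DynFOar$ program for $Q$ under changes of size $\log^c n$.
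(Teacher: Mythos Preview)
The paper does not give its own proof of this lemma; it is quoted from \cite{DKMTVZ20} as a black box. Your proposal is a faithful reconstruction of the pipelining (``muddling'') argument from that reference: run the $\AC^d$ initialization $\mathbb{A}$ layer-by-layer over $L=O(\log^d n)$ $\FOar$ steps in $L$ staggered copies, buffer the at most $L\cdot\log^c n=O(\log^{c+d}n)$ changes that arrive during each copy's window, and discharge them as the single permitted update once that copy's initialization is complete. The use of almost-domain-independence to neutralize domain drift and the handling of the initial $L$-step bootstrap via the empty starting structure are also as in \cite{DKMTVZ20}. In short, your sketch is correct and matches the original argument; there is nothing in the present paper to compare it against.
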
  
The parity and majority functions of $n$ bits $a_1, \ldots, a_n$ are true if $\sum_{i=1}^n a_i = 1$ (mod $2$) and $\sum_{i=1}^n a_i \ge n/2$, respectively. We refer the readers to Section~\ref{sec:dynprelims} and \cite{DKMSZ18, DMVZ18, DKMTVZ20} for more discussions on the basics of the dynamic complexity framework. 

\subparagraph*{Weight function and Circulation:} Let $G(V, E)$ be an undirected graph with the vertex set $V$ and edge set $E$. By $G(V, \dvec{E})$ denote the  corresponding graph where each of its edges is replaced by two directed edges, pointing in opposite directions.

A set system $\mathcal{M}$ on a universe $U$ is a family of subsets of $U$ i.e.
$\mathcal{M} \subseteq 2^U$. Examples include the family of $s,t$-shortest 
paths and perfect matchings in a graph.
A weight function $w: U \rightarrow \mathbb{Z}_{\geq 0}$ (
the set of non-negative integers)
induces a weight of $w(M) = \sum_{e \in M}{w(e)}$ on an element $M \in \mathcal{M}$. Such a weight function is said to be \emph{isolating} for $\mathcal{M}$ if 
an element $M_0 \in \mathcal{M}$ with the least weight is unique. The notion
of isolation can be extended to a collection of families of graphs such as
the collection of all families of $s,t$-shortest paths for all $s,t \in V$. The weight function $w(e) = 2^e$ is a trivial isolating function -- the crucial point is to give weight function polynomially bounded in the size of the universe.
For an arbitrary set system $\mathcal{M}$ 
Mulmuley et al.~\cite{MulmuleyVV87} gave a randomized construction of such a weight function. 

A weight function $w : \dvec{E} \rightarrow \mathbb{Z}$ is called \emph{skew-symmetric} if for all $e\in \dvec{E}$, $w(e) = -w(e^r)$ (where $e^r$ represent the edge with its direction reversed). 
The \emph{circulation} of a directed cycle under a skew symmetric weight
function is the sum of weights of the directed edges in the cycle. The 
skew-symmetric weight function $w$ induces a \emph{non-zero circulation}
on the graph if every directed cycle in the graph gets a non-zero circulation
under $w$.

We know from \cite{BTV09} that if $w$ assigns non-zero circulation to every cycle that consists of edges of $\dvec{E}$, then it isolates a directed path between each pair of vertices in $G(V,\dvec{E})$. Also, if $G$ is a bipartite graph, then the weight function $w$ can be used to construct a weight function $w' : E \rightarrow \mathbb{Z}$ that isolates a perfect matching 
in $G$ \cite{TV12}. For planar \cite{TV12}, bounded genus \cite{DKTV11}, and bounded treewidth graphs \cite{DKMTVZ20} non-zero circulation weights can be computed \emph{deterministically} in Logspace, which is a subclass of \NC. 

A convention is to represent by $\left<w_1,\ldots,w_k\right>$ the weight function that on edge $e$ takes weight $\sum_{i=1}^k{w_i(e)B^{k-i}}$, where $w_1,\ldots,w_k$ are weight functions such that $\max_{i=1}^k{(n \cdot w_i(e))} \leq B$.
\section{Dynamic isolation from static non-zero circulation}\label{sec:iso}
One plausible approach to maintain distance or matching in a graph is to give
dynamically evolving non-zero circulations. Even for planar graphs, this is
a tough order because giving non-zero circulations requires a 
dynamic planar embedding algorithm in $\DynFO$, which is not known. Further,
small changes in the input can lead to large changes in the embedding.
Thus, any algorithm that gives non-zero circulation to planar graphs like
\cite{BTV09,TV12} and uses Logspace algorithms like undirected reachability
\cite{Reingold08} cannot be used even if it was made dynamic.
This induces us to side-step maintaining non-zero circulation weights.

Instead, we use a result from \cite{FGT} to convert the given static circulation
to dynamic isolating weights. Notice that \cite{FGT} yields a black box recipe
to produce isolating weights of quasipolynomial magnitude, which we label
as \emph{FGT-weights} in the following way.
Given a bipartite graph $G$, one first considers a non-zero circulation
of exponential magnitude viz. $w_0(e) : e \mapsto 2^e$. Next, consider a 
list of $\ell = O(\log{n})$ primes $\vec{p} = (p_1,\ldots,p_{\ell})$ 
which yield a weight function $w_{\vec{p}}(e)$. This is defined
by taking the $\ell$ weight functions $w_0\bmod{p_i}$ for $i \in \{1,\ldots,\ell\}$
and concatenating them with \emph{shifting} the weights to the higher-order bits appropriately, that is:
$w_{\vec{p}}(e) = \left<w_0(e) \bmod{p_1},\ldots,w_0(e) \bmod{p_\ell}\right>$.
This is so that there is no overflow from the $i$-th field to the 
$(i-1)$-th for any $i \in \{2,\ldots,\ell\}$. 

Suppose we start with a graph with static weights ensuring non-zero 
circulation. In a step, some edges are inserted or deleted. 
The graph after deletion is a subgraph of the original graph; hence the
non-zero circulation remains non-zero after a deletion\footnote{If we merely had isolating
weights, this would not necessarily preserve isolation.}, but we have to do more
in the case of insertions. We aim to give the newly inserted edges FGT-weights
in the higher-order bits while giving weight $0$ to all the original edges in 
$G$ again in the higher-order bits. Thus the weight of all perfect matchings
that survive the deletions in a step remains unchanged. Moreover, if none such
survive but new perfect matchings are introduced (due to insertion of edges)
the lightest of them is determined solely by the weights of the newly
introduced edges. In this case, our modification of the existential proof 
from \cite{FGT} ensures that the minimum weight perfect matching is unique.

Notice that the FGT-recipe applied to a graph with polylog ($N = \log^{O(1)}{n}$) edges yields quasipolylogarithmically ($N^{\log^{O(1)}{N}}$) large weights which are therefore still subpolynomial 
($2^{(\log{\log{n}})^{O(1)}} = 2^{o(\log{n})} = n^{o(1)}$).
Thus the weights remain polynomially bounded when shifted to accommodate for the old weights. Further the number of primes is polyloglog 
($\log^{O(1)}{N} = (\log{\log{n}})^{O(1)}$) 
and so sublogarithmic ($\log^{o(1)}{n}$).
Hence, the number of possible different weights is subpolynomial, 
which allows us to derandomize our algorithm.

Before getting into technical details, we point out that in \cite{DKMTVZ20} a
similar scheme is used for reachability and bears the same relation to
\cite{KT16} as this section does to \cite{FGT}. 
We have the following lemma, which we prove in Appendix~\ref{subsec:isoApp}:
\begin{lemma} \label{lem:combFGT}
Let $G$ be a bipartite graph with a non-zero circulation
$w$. Suppose $N = \log^{O(1)}{n}$ edges are inserted into $G$ to yield $G^{new}$
then we can compute polynomially many weight functions in $\FOar$ that
have $O(\log{n})$ bit weights, and at least one of them,
$w^{new}$ is isolating. Furthermore, the weights of the original edges remain
unchanged under $w^{new}$.
\end{lemma}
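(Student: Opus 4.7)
The plan is to assemble the target weight function by placing the given weight $w$ unchanged on the old edges in the low-order bits, and placing an FGT-style weight, applied only to the newly inserted edges, in the high-order bits. Concretely, for each tuple $\vec{p}=(p_1,\ldots,p_\ell)$ of $\ell=O(\log N)$ primes bounded by $\mathrm{poly}(N)$, set $w^{FGT}_{\vec{p}}(e)=\langle 2^e\bmod p_1,\ldots,2^e\bmod p_\ell\rangle$ on new edges, and define
\[
w^{new}_{\vec{p}}(e)=\begin{cases}w(e)&\text{if }e\in E,\\ B\cdot w^{FGT}_{\vec{p}}(e)&\text{if }e\in E^{new}\setminus E,\end{cases}
\]
where $B>n\cdot\max_{e}w(e)$ is a shift guaranteeing that the high-order FGT part strictly dominates the low-order $w$ part on any perfect matching. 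The old-edge weights are then manifestly preserved, and each $w^{new}_{\vec{p}}(e)$ fits in $O(\log n)$ bits since $\sum_i\log p_i=(\log\log n)^{O(1)}$ and $B$ is polynomial in $n$.

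The number of candidate tuples $\vec{p}$ is at most $\mathrm{poly}(N)^{O(\log N)}=2^{O((\log\log n)^2)}=n^{o(1)}$, hence polynomially many; each candidate is generable in $\FOar$ using built-in arithmetic (computing $2^{e}\bmod p$ for $p=\mathrm{poly}(N)$ is straightforward given $+,\times,\le$). To see that at least one candidate is isolating, I would argue that $w^{new}_{\vec{p}}$ has non-zero circulation on every alternating cycle of $G^{new}$, the standard sufficient condition for the minimum-weight perfect matching to be unique. An alternating cycle $C$ contained entirely in $E$ inherits its circulation from $w$ and so is non-zero by assumption. For an alternating cycle $C$ touching at least one new edge, the $B$-amplified high-order term dominates any contribution from the low-order $w$-part, so it suffices to ensure that the signed sum of $w^{FGT}_{\vec{p}}$ over the new edges of $C$, with signs inherited from the orientation of $C$, is non-zero.

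The crux, and the main obstacle, is this last condition: one must extend the FGT pigeonhole argument from alternating cycles in a single bipartite graph to the more general \emph{signed patterns} of new edges that arise by restricting alternating cycles of $G^{new}$ to $E^{new}\setminus E$. Such a pattern is a signed subset of at most $N$ new edges; there are only $2^{O(N)}$ possibilities, and under the exponential base $w_0(e)=2^e$ each pattern evaluates to a non-zero integer of magnitude $2^{O(N)}$ and hence vanishes modulo only $O(N)$ primes. An FGT-style greedy selection of $\ell=O(\log N)$ primes from the $\Theta(\mathrm{poly}(N)/\log N)$ primes below $\mathrm{poly}(N)$ then rules out simultaneous vanishing on every pattern, and the corresponding tuple $\vec{p}$ lies in our enumeration. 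Once this pattern-level FGT claim is established, uniform computability, the bit-length bound, and preservation of the original weights follow routinely, and the lemma is in hand.
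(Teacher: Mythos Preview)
Your reduction to signed ``patterns'' of new edges is where the argument breaks. You correctly note that there are $2^{O(N)}$ such patterns and that each pattern value $\sum \pm 2^{e_i}$ has at most $O(N)$ prime divisors. But a greedy selection of primes from a pool of $M=\mathrm{poly}(N)$ then needs $\ell = \Omega(N/\log N)$ rounds, not $O(\log N)$: each chosen prime eliminates at most a $1-O(N)/M$ fraction of the surviving patterns, so to drive $2^{\Theta(N)}$ patterns down to zero you need $\Theta(N/\log N)$ primes. With that many primes the FGT block alone has $\Omega(N)$ bits, and the number of candidate tuples $\vec{p}$ is $2^{\Omega(N)}$; both violate the lemma's bounds as soon as $N=\log^{c}n$ with $c>1$. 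In short, you are trying to obtain non-zero circulation on \emph{every} alternating cycle of $G^{new}$ with $(\log\log n)^{O(1)}$-bit FGT weights, and that is simply too strong a target.

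The paper's proof (following FGT) reaches $O(\log N)$ primes by a different mechanism that does \emph{not} attempt global non-zero circulation. It runs $\log N$ stages: at stage $i$ one passes to the subgraph $G_i$ given by the union of all minimum-weight perfect matchings under the weight accumulated so far, and maintains the invariant that $G_i$ has no cycle containing at most $2^{i+1}$ new edges. The point is that in $G_i$ a cycle with at most $2^{i+2}$ new edges is determined by a $4$-tuple of new edges (two such cycles sharing the tuple would produce a forbidden shorter cycle), so there are only $N^{4}$ of them; by FKS a single $O(\log N)$-bit prime gives them all non-zero circulation, and the crucial FGT lemma then removes them from $G_{i+1}$. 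After $\log N$ stages no cycle with a new edge survives in $G_\ell$, which gives isolation; cycles consisting solely of old edges are handled by the original $w$ sitting in the low-order bits. The resulting $w^{new}$ is isolating but need not give non-zero circulation to every cycle of $G^{new}$, which is exactly the stronger property your direct approach would require.
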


\section{Maximum Cardinality Matching Search in $\DynFOpar$}\label{sec:ac0p}
In this section, we convert the static algorithm for maximum matching search in bipartite graphs into a dynamic algorithm with the help of the isolating weights from the previous section. In the static setting \cite{DKKM18} the problem reduces to determining non-singularity of an associated matrix given a non-zero circulation for the graph. 

The algorithm
extracts what is called a min-weight \emph{generalized} perfect matching 
(min-weight GPM), that is,  a matching along with some self-loops. 
The construction proceeds by adding a distinct edge $(v,t_v)$ 
on every vertex $v \in V(G)$ with a self-loop on the new vertex $t_v$
to yield the graph $G'$.
 The idea is to match as many vertices as possible in $G'$ using the 
actual edges of $G$ while reserving the pendant edges $(v,t_v)$
to match vertices that are
unmatched by the maximum matching. If a vertex $v$ is matched in a 
maximum matching of $G$ then the vertex $t_v$ is ``matched'' 
using the self-loop. 

Given a non-zero circulation weight $w'''$ for $G$ 
the weight function for $G'$ is $w = \left<w',w'',w'''\right>$. 
Here we represent by $w'$ the function that is identically
$0$ for all the self loops and is $1$ for all the other edges.
$w''(e)$ is zero except for pendant edges $e = (v,t_v)$, for $v \in V(G)$,
which have
$w''(e) = v$ (where $v$ is interpreted as a number in $\{1,\ldots,|V(G)|\}$)
such that all vertices get distinct numbers. The paper \cite{DKKM18} 
considers the weighted Tutte matrix $T$ where for an edge $(u,v)$
the entry $T(u,v) = \pm x^{w(u,v)}$ (with a positive sign say iff $u < v$)
and is zero otherwise. It shows that in the univariate polynomial $det(T)$
the least degree term $x^W$ with a non-zero coefficient must have
this coefficient equal to $\pm 1$ and the exponent $W$ is twice the weight
of the minimum weight maximum matching. The edges in the maximum 
matching can then be obtained by checking if on removing the edge $(u,v)$ 
the weight of the minimum weight maximum matching changes.

The idea behind the proof (see Appendix~\ref{subsec:isaacProofApp} for a
detailed version) is that:
\begin{enumerate}
\item The most significant weight function $w'$ ensures that the cardinality 
of the actual edges (i.e. edges from $G$) picked in the GPM in $G'$ equals the 
cardinality of the maximum matching in $G$. 
\item The next most significant weight function $w''$ is used to ensure 
that all the GPMs use the same set of pendant edges. 
\item The least significant weight function $w'''$ then isolates the GPM since
all min-weight GPM's are essentially perfect matchings restricted to the same
set $S$ of vertices, namely those that are not matched by the corresponding 
pendant edges and the non-zero circulation weights on $G$ ensures that 
these are isolating weights on the induced graph $G[S]$.
\end{enumerate}
We would like to claim that we just need \emph{isolating}
 weights $w'''$ instead of
non-zero circulation weights to ensure that the above technique works.
The first two steps go through unchanged. However, the third step does not work
since isolating weights for $G$ might not be isolating weights for the subgraph
$G[S]$. However, Lemma~\ref{lem:combFGT} can be applied to the graph $G[S]$
directly -- notice that in the above proof sketch $S$ is determined by the first
two weight functions $w',w''$ and does not depend on the third $w'''$.

As described above, we need to maintain the determinant of a certain matrix
related to the Tutte matrix in order to find the size of the maximum
cardinality matching.
The Matrix Determinant Lemma 
allows us to maintain the determinant by reducing it to maintaining the inverse
of the matrix. To maintain the inverse, the Sherman-Morrison-Woodbury formula tells us how to reduce the task of recomputing the inverse of a non-singular matrix under small changes to that of computing the inverse of a small matrix statically. So we need to ensure that the matrix remains invertible throughout which is what we achieve below by tinkering with the definition of the Tutte matrix.

We define the following generalized Tutte matrix with rows and columns 
indexed by $V(G) \cup \{t_v : v \in V(G)\}$ with the following weights:
$T(t_v,t_v) = 1$, $T(v,v) = x^{w_\infty}$ and 
$T(a,b) = x^{\left<w',w'',w'''\right>(a,b)}$ whenever $a \neq b$. 
Here $w_\infty$ is a polynomially bounded number larger than the largest
of the weights $\left<w',w'',w'''\right>(a,b)$.
We now have the following:
\begin{lemma}\label{lem:genTutte} Let $T$ be the matrix defined above, then 
the highest exponent $w$ such that $x^w$ divides $det(T)$ is the weight of the
min-weight maximum cardinality matching in $G + w_\infty \times$
the number of unmatched vertices in the maximum matching. Further, the matrix
$T$ is invertible.
\end{lemma}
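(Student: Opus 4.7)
The plan is to analyze $\det(T)$ through its permutation-sum expansion, exploiting a key structural constraint: for $u \notin \{v, t_v\}$, we have $T(t_v, u) = 0 = T(u, t_v)$ since pendants are the only non-trivial incidences of $t_v$ in $G'$. Consequently, in any permutation $\sigma$ yielding a non-zero monomial, each $t_v$ is either fixed (contributing $T(t_v, t_v) = 1$) or paired with $v$ in a 2-cycle (contributing $T(v, t_v) T(t_v, v)$). So $\sigma$ decomposes uniquely as a subset $U \subseteq V(G)$ of pendant-paired vertices plus a permutation $\sigma'$ of $V(G) \setminus U$, giving
\begin{equation*}
\det(T) = \sum_{U \subseteq V(G)} \epsilon_U \left(\prod_{v \in U} x^{2 w(v,t_v)}\right) \det\bigl(T^{V(G) \setminus U}\bigr),
\end{equation*}
where $T^S$ is the principal submatrix of $T$ on $S$ (diagonal $x^{w_\infty}$, off-diagonals $x^{\langle w',w'',w'''\rangle(u,v)}$ for $G$-edges).

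For invertibility, I examine the top-degree monomial. The identity term ($U = \emptyset$, $\sigma' = \mathrm{id}$) contributes $x^{n w_\infty}$ with coefficient $+1$. Any other configuration either pairs some $v$ with $t_v$ or moves a vertex of $V(G)$ off the diagonal, in each case trading a diagonal of degree $w_\infty$ for entries whose combined degree is at most $2 w_{\max}$, where $w_{\max}$ bounds all off-diagonal exponents. Choosing $w_\infty$ polynomially but strictly larger than $2 w_{\max}$ forces every non-identity contribution to have strictly smaller degree, so $x^{n w_\infty}$ has coefficient $\pm 1$ in $\det(T)$. This proves $T$ is non-singular.

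For the lowest-degree monomial, I dually minimize the exponent by moving as many $V(G)$-vertices as possible away from the expensive $x^{w_\infty}$ diagonal, and the three-tier weight structure $\langle w', w'', w''' \rangle$ handles this in priority order, mirroring the static argument sketched before the lemma. The dominant field $w'$ forces the $\sigma'$-cycles together with the pendant 2-cycles to cover a number of $V(G)$-vertices equal to twice the size of a maximum matching of $G$. The middle field $w''$, which assigns distinct weights $v$ to pendants $(v, t_v)$, pins down $U$ uniquely as the complement of the chosen max matching's support. The least significant field $w'''$, by the non-zero-circulation property and Lemma~\ref{lem:combFGT} applied to the induced subgraph $G[V(G) \setminus U]$, isolates the min-weight maximum matching. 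Hence the minimum exponent equals the weight of the min-weight maximum matching plus $w_\infty$ times the count of vertices that remain fixed in $\sigma'$ (the unmatched vertices), and its coefficient is $\pm 1$ so it does not cancel.

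The main obstacle is ruling out $\sigma'$-cycles of length greater than $2$ that could yield strictly lower exponent or cause cancellation. For bipartite $G$, such cycles are even cycles in $G[V(G) \setminus U]$, each decomposing into two alternating perfect matchings on the cycle's vertex set. The non-zero-circulation property of $w'''$ (extended via Lemma~\ref{lem:combFGT} to the relevant induced subgraph) guarantees these two matchings have distinct $w'''$-weights, so their contributions to the determinant lie at different degrees and cannot cancel. Combined with the priority ordering of $w', w'', w'''$ and the invertibility argument above, this establishes both claims of the lemma.
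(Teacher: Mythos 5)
Your permutation-sum decomposition is a correct unpacking of what the paper only cites (the properties of $\langle w', w'', w''' \rangle$ ``inherited from \cite{DKKM18}''), and your invertibility argument is essentially the paper's -- in fact you make it slightly more careful by observing that $w_\infty$ must exceed twice the largest off-diagonal exponent. This is the right requirement: a pendant $2$-cycle $(v,t_v)$ trades the diagonal contribution $x^{w_\infty}$ for $x^{2\langle w',w'',w'''\rangle(v,t_v)}$, so $w_\infty > w_{\max}$ alone (as the paper writes) would not make the all-diagonal monomial the top term. Your reading of the structural constraint on $t_v$ (each $t_v$ is fixed or paired with $v$) is also exactly the observation the paper's argument relies on implicitly.

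There is, however, an internal inconsistency in your minimum-exponent analysis that you should resolve. You first assert that $w''$ ``pins down $U$ uniquely as the complement of the chosen max matching's support,'' i.e.\ the unmatched vertices lie in $U$ and are therefore paired with their pendants via $2$-cycles $(v,t_v)$. Two sentences later you say the minimum exponent includes ``$w_\infty$ times the count of vertices that remain fixed in $\sigma'$ (the unmatched vertices).'' These cannot both hold: if an unmatched $v$ is in $U$ it is not in the domain of $\sigma'$ at all, and if it is a fixed point of $\sigma'$ it contributes the self-loop $T(v,v)=x^{w_\infty}$ rather than a pendant pair. Which of the two occurs in the minimum-degree permutation is determined by whether $w_\infty$ is smaller or larger than $2\langle w',w'',w'''\rangle(v,t_v)$, and the two regimes give different formulas for the minimum exponent -- only the second regime produces the $w_\infty\times(\text{unmatched})$ term in the lemma statement, but it is exactly the first regime that you need for the $w''$-based uniqueness argument inherited from \cite{DKKM18} (which hinges on pendant edges actually appearing in the min-weight GPM). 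You should pick one regime, state the corresponding constraint on $w_\infty$ explicitly, and redo the three-tier analysis consistently; as written the argument for the ``lowest-degree monomial'' claim does not go through.
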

\begin{proof}
From the properties of the weight function $\left<w',w'',w'''\right>$ 
inherited from \cite{DKKM18} we see that the isolated minimum weight 
generalized perfect matching corresponds to the monomial with the least exponent
as described in Lemma~\ref{lem:genTutte}. 
In order to guarantee invertibility of $T$ we just need to prove that the 
product of the diagonal terms
yields a monomial of higher degree than any other monomial and of coefficient
one, since this implies that the matrix is non-singular.
Consider the product of the
diagonal entries viz. $x^{\frac{|V(G')|}{2}w_{\infty}} = x^{|V(G)|w_{\infty}}$.
Any monomial with less than $|V(G)|$
diagonal entries $T(v,v)$ is bound to be of much smaller exponent. Now the
monomials which use an off-diagonal entry $T(v,u)$ or $T(v,t_v)$
must miss out on the diagonal entry in the $v$-th row, making the exponent 
much smaller.
\end{proof}

\subsection{Maintaining the determinant and inverse of a matrix}
We need the following definitions and results about univariate polynomials,
matrices of univariate polynomials and operations therein over a finite field
of characteristic $2$. Let $\Fbb_2$ be the field of characteristic
$2$ containing $2$ elements.
For (potentially infinite) power series $f,g \in \Fbb_2[[x]]$,
we say $f$ \emph{$m$-approximates} $g$ (denoted by
$f \approx_m g$) if the first $m$ terms of $f$ and $g$ are the same.
We will extend this notation to matrices and write $F \approx_m G$
where $F,G \in \Fbb_2[[x]]^{\ell\times\ell}$ are matrices of power series.
We will have occasion to use this notation only when one of $F,G$ is
a matrix of polynomials, that is, a matrix of finite power series.

Consider the Matrix Determinant Lemma:
$$\det(A + UBV) =  \det(I + BVA^{-1}U) \det(A)$$
and the Sherman-Morrison-Woodbury formula:
$$(A + UBV)^{-1} = A^{-1} - A^{-1}U(I + BVA^{-1}U)^{-1}BVA^{-1}$$
Notice that if $A \in \Fbb_2[x]^{n\times n}$ with the degree of entries
bounded by $w_{\infty}$, then there exists $A^{-1} \in \Fbb_2[[x]]^{n\times n}$.
For us, only the monomials with degrees at most $w_\infty$ are relevant.
Thus we will assume that we truncate $A^{-1}$ at $w_\infty$ many terms
to yield matrix $A' \approx_{w_\infty} A^{-1}$. Then we have the following:

\begin{lemma}\cite[Lemma 10]{DKMTVZ20}\label{lem:approxWoodbury}
Suppose $A \in \Fbb_2[x]^{n \times n}$ is invertible over $\Fbb_2[[x]]$, 
and $C \in \Fbb_2[x]^{n \times n}$ is an $m$-approximation of $A^{-1}$. 
If $A + UBV$ is invertible over $\Fbb_2[[x]]$ with 
$U \in \Fbb_2[x]^{n \times \ell}, B \in \Fbb_2[x]^{\ell \times \ell},$ and 
$V \in \Fbb_2[x]^{\ell \times n}$, then
      $(A + UBV)^{-1} \approx_m C-CU(I+BVCU)^{-1}BVC$. Furthermore, if $\ell \leq \log^c n$ for some fixed $c$ and all involved 
polynomials have polynomial degree in $n$, then the right-hand side can be 
computed in $\FOpar$ from $C$ and $\Delta A$.
\end{lemma}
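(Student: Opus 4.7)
The plan is to split the argument into two stages: an algebraic stage that establishes the approximation identity, and a computational stage that verifies the $\FOpar$ upper bound.

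\emph{Stage 1 (algebraic approximation).} First I would record the elementary closure properties of the relation $\approx_m$ under matrix arithmetic over $\Fbb_2[[x]]$. Namely, $X \approx_m Y$ and $Z \approx_m W$ imply $XZ \approx_m YW$, via the identity $XZ - YW = (X-Y)Z + Y(Z-W)$ together with the observation that each entry of the right-hand side is a sum of products of a power series of minimum degree $\geq m$ with one of non-negative minimum degree. Similarly, if $P$ and $Q$ are both invertible in $\Fbb_2[[x]]^{\ell\times\ell}$ and $P \approx_m Q$, then $P^{-1} \approx_m Q^{-1}$, as one sees from the identity $P^{-1} - Q^{-1} = P^{-1}(Q-P)Q^{-1}$ combined with the fact that an inverse of a power-series matrix whose constant term is invertible itself has constant term of minimum degree $0$.

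Armed with these facts I would apply Sherman-Morrison-Woodbury over $\Fbb_2[[x]]$ (legal since both $A$ and $A+UBV$ are assumed invertible there):
\[
(A + UBV)^{-1} = A^{-1} - A^{-1}U(I + BVA^{-1}U)^{-1}BVA^{-1}.
\]
Starting from $C \approx_m A^{-1}$ and cascading the product-closure property: $CU \approx_m A^{-1}U$, hence $BVCU \approx_m BVA^{-1}U$, hence $I + BVCU \approx_m I + BVA^{-1}U$. The right-hand factor is invertible by hypothesis; since we may assume $m \geq 1$, the two matrices share a constant term, so $I + BVCU$ is also invertible in $\Fbb_2[[x]]$, and by the inversion-closure fact its inverse $m$-approximates $(I + BVA^{-1}U)^{-1}$. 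One more round of product-closure applied to the outer $U(\cdot)BV$ sandwich then delivers the claimed $m$-approximation of $(A+UBV)^{-1}$ by $C - CU(I+BVCU)^{-1}BVC$.

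\emph{Stage 2 (complexity).} The right-hand side decomposes into three kinds of subcomputations on matrices whose $\Fbb_2[x]$-entries have polynomially bounded degree. First, the outer products $CU$, $VC$, and the final re-assembly all involve at least one factor of small inner dimension $\ell \leq \log^c n$; each output coefficient is a short XOR over polynomially many products of two coefficients, which fits in $\FOpar$. Second, the central task is inverting the $\ell \times \ell$ polynomial matrix $M := I + BVCU$ (truncated at degree $w_\infty$). Writing $M = M_0 + xN$ where $M_0$ is its constant term (invertible in $\Fbb_2^{\ell \times \ell}$ by the argument above), one has $M^{-1} = \bigl(\sum_{k \geq 0} (xM_0^{-1}N)^k\bigr) M_0^{-1}$, and truncating at $w_\infty$ leaves polynomially many terms, each a product of $\ell \times \ell$ polynomial matrices; because $\ell$ is polylogarithmic, even these iterated products can be organised as a constant-depth XOR/AND circuit.

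\emph{Main obstacle.} I expect the approximation bookkeeping in Stage 1 to be routine once the two closure properties are in place; the genuinely delicate point is Stage 2, in particular the $\FOpar$ implementation of the small-matrix inverse. One must carefully exploit that $\ell$ is polylog and that the field is $\Fbb_2$ so that both the underlying $\ell\times\ell$ $\Fbb_2$-inverse and the iterated polylog-dimensional polynomial multiplications used by the Newton/geometric-series expansion fit inside constant depth with parity gates.
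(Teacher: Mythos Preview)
The paper does not supply its own proof of this lemma; it is quoted as \cite[Lemma~10]{DKMTVZ20} and used as a black box. The only supporting material the paper provides is Lemma~\ref{lem:healyViola} (scalar-polynomial products, powers, and inverses over $\Fbb_2[x]$, taken from \cite{HealyV}). So there is no in-paper argument to compare against beyond that.

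Your Stage~1 is correct and is exactly the intended algebraic half: the closure of $\approx_m$ under products and inverses of power-series matrices with invertible constant term, followed by Sherman--Morrison--Woodbury, is precisely how one obtains the approximation statement.

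Your Stage~2, however, stops short at exactly the point you label the ``main obstacle'', and the gap is real. Your Neumann-series plan $M^{-1}\approx_m\bigl(\sum_{k=0}^{w_\infty}(xM_0^{-1}N)^k\bigr)M_0^{-1}$ leaves two subproblems that are \emph{not} covered by the hand-wave ``because $\ell$ is polylogarithmic, even these iterated products can be organised as a constant-depth XOR/AND circuit'':
\begin{itemize}
\item Computing $M_0^{-1}$ for an $\ell\times\ell$ matrix over $\Fbb_2$ with $\ell=\log^c n$. The cofactor expansion has $\ell!$ terms, which is superpolynomial once $c\ge 1$, and determinant over $\Fbb_2$ is $\oplus\mathsf{L}$-complete in general, not known to sit in $\ACz[\oplus]$.
\item Computing $(M_0^{-1}N)^k$ for $k$ up to $w_\infty=n^{O(1)}$. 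The na\"ive path sum has $\ell^{k-1}$ terms; and the Healy--Viola iterated-product results you would want to invoke (the ones behind Lemma~\ref{lem:healyViola}) are stated for \emph{commutative} $\Fbb_2$-algebras such as $\Fbb_2[x]/(x^m)$ or $\Fbb_{2^m}$, whereas the algebra $(\Fbb_2[x]/(x^m))^{\ell\times\ell}$ is non-commutative and of polynomial (not polylog) $\Fbb_2$-dimension.
\end{itemize}
So the sentence asserting that these fit in constant depth with parity is precisely the statement requiring proof, and the scalar primitives of Lemma~\ref{lem:healyViola} do not deliver it directly. One has to route the small-matrix inverse through those primitives more carefully (e.g.\ via Cayley--Hamilton to reduce matrix powers to scalar-polynomial arithmetic modulo a degree-$\ell$ characteristic polynomial, or by whatever device \cite{DKMTVZ20} actually uses); your proposal names the obstacle but does not surmount it.
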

Similar to the above (using the closure of $m$-approximation under
product), we get an approximate version of the Matrix Determinant Lemma, that is:
\begin{proposition}\label{prop:approxMatDet}
Suppose $A \in \Fbb_2[x]^{n \times n}$ is invertible over $\Fbb_2[[x]]$, 
and $C \in \Fbb_2[x]^{n \times n}$ is an $m$-approximation of $A^{-1}$ 
and polynomial $d(x)\approx_m det(A)$ then 
$d\cdot\det(I + BVCU)\approx_m \det(A + UBV)$.
\end{proposition}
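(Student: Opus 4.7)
The plan is to derive the approximate identity directly from the exact Matrix Determinant Lemma by propagating the $m$-approximation through the operations on the right-hand side. The hinge is that $\approx_m$ is a congruence with respect to addition and multiplication of power series: if $f \approx_m g$ and $f' \approx_m g'$ then $f+f' \approx_m g+g'$ and $f f' \approx_m g g'$, since the first $m$ coefficients of a sum or product depend only on the first $m$ coefficients of the inputs. This extends entrywise to matrices, and hence to any polynomial expression in matrix entries, such as the determinant.

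First, I would record the exact identity
\[
\det(A + UBV) \;=\; \det(A)\cdot\det(I + BVA^{-1}U),
\]
which holds over $\Fbb_2[[x]]$ by the standard Matrix Determinant Lemma, using that $A$ is invertible over $\Fbb_2[[x]]$ by hypothesis. Now, from $C \approx_m A^{-1}$ and the fact that $U,B,V$ have \emph{polynomial} (hence finite) entries in $\Fbb_2[x]$, the entrywise congruence property yields
\[
BVCU \;\approx_m\; BVA^{-1}U,
\]
and therefore $I + BVCU \approx_m I + BVA^{-1}U$ entrywise. Since $\det(\cdot)$ of an $\ell\times\ell$ matrix is a fixed polynomial in its entries (a signed sum of products — and in characteristic $2$ signs are irrelevant anyway), applying the product/sum closure of $\approx_m$ term by term in the Leibniz expansion gives
\[
\det(I + BVCU) \;\approx_m\; \det(I + BVA^{-1}U).
\]

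Finally, multiplying the two approximations $d \approx_m \det(A)$ and $\det(I+BVCU) \approx_m \det(I+BVA^{-1}U)$ with the product-closure property, and substituting into the exact identity, yields
\[
d\cdot \det(I + BVCU) \;\approx_m\; \det(A)\cdot \det(I + BVA^{-1}U) \;=\; \det(A + UBV),
\]
which is the claim. The only nonroutine point is the closure of $\approx_m$ under multiplication and under the Leibniz determinant expansion; I would state and briefly justify this as a preparatory observation (noting that a product of $m$-approximations is an $m$-approximation because coefficient $k<m$ of a product depends only on coefficients $\leq k$ of the factors), after which the proof is a direct substitution into the exact Matrix Determinant Lemma.
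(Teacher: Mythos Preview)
Your proposal is correct and follows the same approach the paper indicates: the paper does not give a full proof but simply remarks that the proposition follows ``similar to the above (using the closure of $m$-approximation under product)'', and your argument is precisely a spelled-out version of that closure-under-product reasoning applied to the exact Matrix Determinant Lemma.
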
 
We can now proceed to prove Theorem~\ref{thm:dynFO2}:
\begin{proof}[Proof of Theorem \ref{thm:dynFO2}]
By putting $m = w_\infty$ and applying
Lemma~\ref{lem:approxWoodbury} and
Propositions~\ref{prop:approxMatDet} 
to the generalized Tutte matrix from Lemma~\ref{lem:genTutte}
and using the (Muddling) Lemma~\ref{lem:muddle}, we complete the matching part
of Theorem~\ref{thm:dynFO2} (see Appendix~\ref{sec:distance} for the proof
involving distance).
\end{proof}

\section{Maximum Cardinality Matching in $\DynFOar$}\label{sec:matchFO}
In this section we give a $\DynFOar$ algorithm for maintaining the size of a maximum matching under $O(\frac{\log n}{\log \log n})$ changes. The approach in the previous section, as we needed to maintain polynomials of large (polynomial in $n$) degree, has the limitation that it only gives a $\DynFOpar$ bound. 
The main ingredient here is a new algorithm for maintaining the rank of a matrix in $\DynFOar$ under $O(\frac{\log n}{\log \log n})$ changes (Section~\ref{sec:rank}). Our matching algorithm follows the basic approach of the non-uniform \DynFO\ algorithm of \cite{DKMSZ18}. Here, since we use deterministic isolation weights (as opposed to the randomized isolation weights of \cite{MulmuleyVV87}), with some more work, we obtain a \emph{uniform} \DynFOar\ bound under bulk changes. 

The  algorithm of \cite{DKMSZ18} builds on the well-known correspondence between the size of maximum matching
and the rank of the Tutte matrix of the corresponding graph -- if a graph contains a maximum matching of size $m$ then the associated Tutte matrix is of rank $2m$ \cite{L79}. The dynamic rank algorithm from Section~\ref{sec:rank} cannot be applied directly since the entries of the Tutte matrix are indeterminates. However, the rank can be determined by replacing each $x_{ij}$ by $2^{w(i,j)}$. Here $w$ assigns a positive integer weight to every edge $(i, j)$ under which the maximum matching gets unique minimal weight, i.e., it is matching-isolating. Using the Isolation Lemma~\cite{MulmuleyVV87}, it can be shown that the correspondence between the rank and the size of the maximum matching does not change after such a weight transformation \cite{Hoa10, DKMSZ18}. 

Our algorithm diverges from \cite{DKMSZ18} as we need to deterministically compute these isolating weights and also, to somehow maintain those.  Since we do not know how to maintain such weights directly, as in Section~\ref{sec:ac0p}, we convert the static non-zero circulation weights to dynamic isolating weights using the Muddling Lemma~\ref{lem:muddle}. Given a graph $G$, let $B_w$ be its weighted Tutte matrix with each $x_{ij}$ replaced by $2^{w(i,j)}$ for an isolating weight function $w$. Initially, the static non-zero circulation weights provide such weights. Since we are only interested in computing the rank of $B_w$, we do not need to make the initial modifications of adding pendant edges or self-loops to $G$ as before. So the weight function $w$ is just the non-zero circulation weight $\langle w''' \rangle$ here. In the dynamic process, similar to Section~\ref{sec:ac0p}, we use the FGT-weights $w_{new}$ on top for the newly inserted edges.  We have the following:
\begin{lemma}\label{lem:matchtorank}
Given a dynamic algorithm for maintaining the rank of an integer matrix under $k = O(\log^c n)$ changes at each step for some fixed constant $c$, we can maintain the size of the maximum matching in the same complexity class under $O(k)$ changes for the class of graphs where non-zero circulation weights can be computed in \NC. 
\end{lemma}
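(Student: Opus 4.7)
The plan is to reduce matching-size maintenance to rank maintenance of the Tutte matrix evaluated at a matching-isolating weight function that we carry along dynamically. For an isolating $w : E \to \mathbb{Z}_{\ge 0}$ of polynomial bit-length, let $B_w$ denote the integer matrix obtained from the Tutte matrix of $G$ by substituting $x_{ij} \mapsto 2^{w(i,j)}$; the standard isolation argument \cite{Hoa10, DKMSZ18} gives that the rank of $B_w$ is twice the size of a maximum matching in $G$. Since every graph change perturbs only $O(1)$ entries of $B_w$ (apart from those whose weights are refreshed), a single step of $O(k)$ edge changes translates to $O(k)$ matrix entry changes, fitting the hypothesised rank algorithm's per-step change budget.

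For initialisation, invoke the given NC procedure to compute NZC weights $w_0$ for the starting graph $G_0$; these are matching-isolating, and $B_{w_0}$ is handed to the rank algorithm as its initial state. On a deletion the NZC property survives on the remaining subgraph (any directed cycle there is a cycle of the original), so the surviving weights remain isolating and the only matrix changes are zeroings of the deleted entries. On an insertion step, apply Lemma~\ref{lem:combFGT} in $\FOar$ to obtain new weights $w^{new}$ that agree with $w_0$ on all old edges and assign polynomially-bounded FGT-weights in higher-order bits to the newly inserted edges. Forward the resulting $O(k)$ matrix entry changes to the rank algorithm's update step, and report half of the maintained rank as the current matching size.

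The subtlety, and the main obstacle, is that Lemma~\ref{lem:combFGT} is a one-shot construction: it requires NZC (not merely isolating) weights on its base graph, and therefore cannot be naively iterated once the base graph has drifted over many steps. We handle this with the Muddling Lemma (Lemma~\ref{lem:muddle}): it suffices to establish $(\AC^d, 1)$-maintainability under change size $\log^{c+d} n$. The $\AC^d$ initialisation (which subsumes \NC) recomputes NZC weights for the present graph and the rank algorithm's initial state, while the one-shot $\FOar$ update applies Lemma~\ref{lem:combFGT} once to the cumulative bulk insertion and invokes the rank algorithm's update routine on the induced matrix changes. The Muddling Lemma then bootstraps this into a full $\DynFOar$ program under $O(k) = O(\log^c n)$ changes per step, matching the rank algorithm's complexity class. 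The principal technical hazard is coordinating the three ingredients — the rank-matching reduction, the one-shot FGT extension, and the Muddling bootstrap — so that per-step matrix changes stay within the rank algorithm's $O(k)$ budget and the FGT-weights remain polynomially bounded across every muddling window.
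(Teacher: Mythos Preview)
Your overall strategy --- reduce matching size to the rank of the weighted Tutte matrix $B_w$, take NZC weights as the base, extend via Lemma~\ref{lem:combFGT} (FGT-weights) for insertions, and invoke the Muddling Lemma to periodically refresh the NZC base --- is the same as the paper's.

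However, your invocation of the Muddling Lemma in the $(\AC^d,1)$-maintainable form has a genuine gap. To conclude $\DynFOar$ under $\log^c n$ changes via Lemma~\ref{lem:muddle}, you must establish $(\AC^d,1)$-maintainability under changes of size $\log^{c+d} n$. In your single post-initialisation step the bulk change therefore touches $\log^{c+d} n$ edges, hence $\log^{c+d} n$ entries of $B_w$; but the hypothesised rank update routine handles only $k = O(\log^c n)$ entry changes per step. You flag this as a ``hazard'' yet do not resolve it: one $\FOar$ step cannot absorb $\log^{c+d} n$ matrix changes through a routine bounded to $\log^c n$, and iterating that routine $\log^d n$ times inside the step is no longer $\FO$.

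The paper sidesteps this by not appealing to the $(\AC^d,1)$ black box at all. It unrolls the Muddling Lemma's internal double-speed catch-up: the combined static $\NC^b$ procedure (NZC weights followed by $\rank(B_w)$, $b=\max(i,2)$) runs in the background; the $O\!\big(k\cdot(\tfrac{\log^i n}{\log\log n}+\tfrac{\log^2 n}{\log\log n})\big)$ accumulated changes are then fed to the dynamic rank algorithm over that many catch-up steps, each processing $k$ backlogged changes alongside $k$ fresh ones --- so every rank-update invocation stays within its $O(k)$ budget. The FGT-weights are applied once to the full polylog set of accumulated insertions, relative to the freshly computed NZC base. This is the piece your plan is missing: the rank algorithm must be driven over many steps at double speed, not invoked once on an oversized batch.
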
 
\begin{proof}
Given a graph $G$, assume we have an algorithm for computing the non-zero circulation weights $w$ in $\NC^i \subseteq \AC[\frac{\log^i n}{\log \log n}]$ for some fixed integer $i$. Once these weights $w$ are available, rank($B_w$) can be found in \NC$^2$~\cite{ABO} which is contained in \AC$[\frac{\log^2 n}{\log \log n}]$. Since $O(k)$ changes can occur at each step, during this time, total of $O(k \cdot (\frac{\log^i n}{\log \log n} + \frac{\log^2 n}{\log \log n}))$ many new changes accumulate. As $w$ assigns non-zero circulation weights to the edges of $G$, we can assign weight $0$ to the deleted edges and the weights remain isolating. For the newly inserted edges, which are only polylog($n$) many, we compute the polynomially bounded FGT-weights in \AC$^0$ using Lemma~\ref{lem:crucialFGT}. Thanks to Lemma~\ref{lem:muddle}, in $O(\frac{\log^i n}{\log \log n} + \frac{\log^2 n}{\log \log n}))$ many steps we can take care of all the insertions by adding $k$ new edges at each step along with $k$ old ones in double the speed using our rank algorithm. Note that, during the static rank computation phase, we do not restart the static algorithm for computing the weight $w$. Instead, we recompute these weights once the rank computation using them finishes. More precisely, we can think of a combined static procedure that computes the non-zero circulation weights followed by the rank of the weighted Tutte matrix $B_w$ in \NC$^b$ for $b$=max$(i,2)$. And on this combined procedure, we apply our Muddling Lemma~\ref{lem:muddle}. 
\end{proof}
We can now prove Theorem~\ref{thm:matchFO}:
\begin{proof} [Proof of Theorem~\ref{thm:matchFO}]
Similar to \cite[Theorem 16]{DKMSZ18} this implies a \emph{uniform} bounded expansion first-order truth-table (bfo-tt) reduction (for the definition see \cite[Section 3.2]{DKMSZ18} where it was first formalized) from maximum matching to rank in this special case. Since dynamic rank maintenance is in \DynFOar\ under $O(\frac{\log n}{\log \log n})$ changes (see Theorem~\ref{thm:rank}),  in classes of graphs where non-zero circulation weights can be computed in $\NC$ we have the result.
\end{proof}

\section{Maintaining Rank in $\DynFOar$ under bulk changes}\label{sec:rank}
In \cite{DKMSZ18}, it was shown that the rank of a matrix with small integer entries over 
the rationals can be  maintained in $\DynFO$ under changes that affect a single entry at a time. In this section, we generalize this result to changes that affect slightly sublogarithmically many entries ($O(\frac{\log n}{\log \log n})$ to be 
precise). For ease of exposition, we build upon the algorithm as described
in \cite{DKMSZ15}. 
Following is the main theorem of this section:
\begin{theorem}\label{thm:rank}
Rank of a matrix is in $\DynFOar$ under $O(\frac{\log n}{\log \log n})$ changes.
\end{theorem}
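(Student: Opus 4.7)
The plan is to generalize the certificate-based rank algorithm of \cite{DKMSZ15, DKMSZ18}. There, the dynamic program keeps as auxiliary data a pair $(R,C)$ of row- and column-indices such that the submatrix $A_{R,C}$ is non-singular and of maximum possible size, together with its inverse $A_{R,C}^{-1}$ (stored, say, as an integer adjugate with a separately tracked scalar denominator); the output rank is then $|R|=|C|$. Under a single-entry change, one updates $A_{R,C}^{-1}$ via the Sherman--Morrison rank-$1$ formula and pivots by at most one row and one column. The goal is to carry out the analogous update for a bulk change of size $k = O(\log n/\log\log n)$.

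First, I would factor the bulk change as a rank-$\le k$ correction $\Delta = UBV$ with $U \in \mathbb{Z}^{n \times k}$, $V \in \mathbb{Z}^{k \times n}$, and $B \in \mathbb{Z}^{k \times k}$ a diagonal matrix encoding new-minus-old values. Restricted to the current certificate this becomes $A_{R,C} \mapsto A_{R,C} + U' B V'$, and its inverse is obtained by the Sherman--Morrison--Woodbury identity
\[
(A_{R,C} + U' B V')^{-1} = A_{R,C}^{-1} - A_{R,C}^{-1} U' (I + B V' A_{R,C}^{-1} U')^{-1} B V' A_{R,C}^{-1}.
\]
The only non-routine piece is inverting the $k \times k$ capacitance matrix $I + B V' A_{R,C}^{-1} U'$. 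This is exactly the operation enabled by \cite{DMVZ18}: for $k = O(\log n / \log\log n)$ the Leibniz expansion of the determinant and of every cofactor has only $k! = n^{O(1)}$ summands, each a product of $k$ factors of polynomial bit-length, so both the determinant and the full adjugate fit in $\FOar$ using built-in arithmetic.

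Second, I would re-establish maximality of the certificate. A bulk change of size $k$ can shift the global rank by at most $O(k)$, so at most $O(k)$ contractions or expansions are required. For contraction, I would detect that $A_{R,C}$ has become singular by checking whether $\det(I + B V' A_{R,C}^{-1} U') = 0$, and then locate a removable row-column pair by inspecting the $(k-1) \times (k-1)$ minors of the capacitance matrix. For expansion, I would test in parallel, for every $(i,j) \notin R \times C$, whether the bordered matrix $A_{R \cup \{i\}, C \cup \{j\}}$ is non-singular, using the Schur-complement expression $A_{ij} - A_{i,C} A_{R,C}^{-1} A_{R,j}$, which is a single scalar already derivable from the maintained inverse, hence computable in $\FOar$.

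The main obstacle will be orchestrating these two kinds of moves in a constant-depth update program when a single bulk change simultaneously disturbs the interior of the current certificate (handled by Woodbury) and creates room for a strictly larger one outside (requiring pivoting). I would resolve this by issuing, in one parallel round, all $O(k)$ capacitance-based tests at once, breaking ties with a fixed linear order on index pairs to select a unique new certificate, and then deriving its inverse by a second application of Sherman--Morrison--Woodbury against the now-known target. Since every primitive invoked -- the $k \times k$ determinant/adjugate, the Schur-complement scalars, and the selection of a lexicographically first valid new pivot set -- lies in $\FOar$ for $k = O(\log n/\log\log n)$, and only $O(1)$ such rounds are needed, the rank is output as $|R|$ within a single $\FOar$ update, establishing the $\DynFOar$ bound of Theorem~\ref{thm:rank}.
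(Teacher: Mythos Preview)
Your approach diverges from the paper's and has a genuine gap. First, a framing correction: \cite{DKMSZ15,DKMSZ18} do not maintain a certificate $(R,C)$ together with $A_{R,C}^{-1}$; they maintain an \emph{$A$-good basis} $B$ of $\mathbb{Z}_p^n$, a full $n\times n$ basis in which every non-kernel column is $i$-unique (its image under $A$ has a non-zero coordinate shared by no other basis vector), and the rank is read off as the number of such columns. The present paper extends that data structure, not a certificate-plus-inverse.

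The gap is in re-establishing maximality. After a change of size $k=O(\log n/\log\log n)$ the rank can shift by up to $k$, so you must add or remove up to $k$ row--column pairs. Your Schur test $A'_{ij}-A'_{i,C}(A'_{R,C})^{-1}A'_{R,j}\neq 0$ correctly flags every pair $(i,j)$ that extends the certificate by \emph{one}, but taking the lex-first and iterating is $k$ sequential rounds, not $O(1)$, while selecting $k$ individually valid pairs in parallel does not force the resulting $k\times k$ Schur block to be non-singular. Enumerating size-$\le k$ subsets of $[n]$ is out as well, since $\binom{n}{k}=n^{\Theta(k)}$ is super-polynomial for this $k$. The phrase ``issuing, in one parallel round, all $O(k)$ capacitance-based tests at once'' does not resolve this: the capacitance matrix lives on the $k$ change coordinates, whereas the rows/columns that must enter the new certificate can be \emph{any} of the $n$ indices (an unchanged row can become independent once a changed row alters the span of $R$). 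The same problem hits contraction: the $(k-1)\times(k-1)$ minors of the $k\times k$ capacitance matrix tell you its rank, not which of the up to $|R|$ rows of the large certificate to drop.

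This is exactly what the $A$-good basis buys. Because $B$ is a full basis, the product $M=A'B$ differs from $AB$ in at most $k$ \emph{rows} $R_0$, so the entire repair is confined to the $k\times n$ slab $M_{R_0,*}$. Over an $O(\log\log n)$-bit prime $q$ one finds a row basis $R_1$ of this slab by brute force over its $2^{|R_0|}=n^{o(1)}$ row subsets and all $q^{|R_0|}=n^{o(1)}$ coefficient vectors (Claim~\ref{clm:rowBasis}), then a column basis $C_1$ by a prefix-rank test (Claim~\ref{clm:colBasis}); two $k\times k$ inversions and Lemma~\ref{lem:combineMatrices} then produce the four correction factors $D^{(1)},E^{(1)},D^{(2)},E^{(2)}$ in $\ACz$. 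At no point does one need to pick $k$ indices out of a set of size $n$.
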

Before going into the details of the proof,  we start with defining some important notation, followed by our overall proof strategy. 
Let $A$ be a $n \times n$ matrix over $\mathbb{Z}_p$, where $p = O(n^3)$ is a prime. Let $K$ be the kernel of $A$. For a vector $v \in \mathbb{Z}_p^n$, we define $S(v) = \{ i \in n \mid (Av)_i \neq 0\}$, where $(Av)_i$ denotes the $i$th coordinate of the vector $Av$. Let $B$ be a basis of $Z_p^n$. A vector $v\in B$ is called $i$-\textit{unique} with respect to $A$ and $B$ if $(Av)_i \neq 0$ and $(Aw)_i = 0$ for all other $w \in B$. A basis $B$ is called $A$-good if all the vectors in $B - K$ are $i$-unique with respect to $A$ and $B$. For a vector $v \in B-K$, the minimum $i$ for which it is $i$-unique is called the principal component of $v$, denoted as pc($v$). 

Starting with an $A$-good basis $B$, and introducing a small number of changes
to yield $A'$ may lead to $B$ losing its $A$-goodness. To restore this, we alter the
matrix $B$ in four steps to obtain an $A'$-good basis $B'$. The first step
involves identifying a full rank submatrix in $A'$ corresponding to the changed
entries and inverting it. The second step restores the pc's of the affected
columns. In the third and fourth steps, we restore the pc's of the rest of 
vectors, which had lost the pc's. The rough outline is similar to that of
\cite{DKMSZ15} but in order to handle non-constant changes we have to make 
non-trivial alterations and use efficient small matrix inversion from \cite{DMVZ18}.
We have Theorem~\ref{thm:rank} from the following, whose proof is provided 
towards the end of this section:
\begin{lemma}\label{lem:rank}
Let $A, A'\in \mathbb{Z}_p^{n \times n} $ be two matrices such that $A'$ differs from $A$ in $O(\frac{\log n}{\log \log n})$ places. If $B$ is an $A$-good basis then we can compute an $A'$-good basis $B'$ in $\FOar$.
\end{lemma}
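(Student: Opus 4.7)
\smallskip

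\noindent\textbf{Proof plan for Lemma~\ref{lem:rank}.} Let $k = O(\log n / \log \log n)$, let $R, C \subseteq [n]$ denote the sets of row and column indices where $A$ and $A'$ differ (so $|R|, |C| \le k$), and let $K = \ker(A)$, $K' = \ker(A')$. The plan is to follow the four-stage structure of the single-change algorithm from \cite{DKMSZ15}, namely (i) isolate the damage inside a small submatrix and invert it, (ii) repair the principal components of the columns in $C$, (iii) scrub the remaining basis vectors of any ``contamination'' they picked up on coordinates in $R$, and (iv) resolve the vectors that have entered or left the kernel. The crucial new ingredient is that every linear-algebraic object we actually solve for has dimension at most $O(k)$, and matrices of that dimension can be inverted in $\FOar$ by the small-matrix inversion routine of \cite{DMVZ18}. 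All other operations (selecting the relevant submatrix, projecting onto the coordinates in $R$, identifying the subset of $B$ whose principal components lie in $R$, and taking linear combinations indexed by these small sets) only require first-order access with arithmetic.

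\smallskip

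\noindent\textbf{Step 1 (small full-rank submatrix).} Because $B$ is $A$-good, the set $B_R = \{v \in B - K : \mathrm{pc}(v) \in R\}$ has size at most $|R| \le k$. These are precisely the basis vectors whose principal-component certificate can be disturbed by the change. Form the $|R| \times |B_R|$ submatrix $M = (A'v)_{i \in R,\, v \in B_R}$ together with, if necessary, the columns $e_j$ for $j \in C$ needed to witness any newly-nonzero contributions of $A'$ on columns in $C$. A rank computation on this $O(k) \times O(k)$ matrix (itself in $\FOar$ by \cite{DMVZ18}) lets us pick out a full-rank square submatrix $\widehat M$ of dimension at most $O(k)$, and invert it.

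\smallskip

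\noindent\textbf{Steps 2--4 (basis repair).} Using $\widehat M^{-1}$, I pivot: each $v \in B_R$ is replaced by a new vector $v'$ expressed as $v' = v + \sum_{w \in B_R \cup E} \alpha_{v,w}\, w$, where $E \subseteq \{e_j : j \in C\}$ and the coefficients $\alpha_{v,w}$ come from the entries of $\widehat M^{-1}$; the effect is that the $v'$ are $A'$-$i$-unique for some reassignment of principal components among indices in $R$. Next, for every remaining $u \in B - K$ whose image $A'u$ has a nonzero entry in some coordinate $i \in R$ (violating uniqueness against the newly minted $v'$), cancel that entry by subtracting an appropriate first-order-definable combination of the $v'$'s, once again driven by $\widehat M^{-1}$. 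Finally, for vectors that have fallen into $K'$ or dropped out of $K$, reallocate principal components: each $v \in B \cap K$ with $A'v \ne 0$ is absorbed into the non-kernel part by the same pivot, and each old non-kernel vector that now lies in $K'$ is moved into the kernel block. Because everything in sight — which vectors to touch, which coefficients to use, which row indices to reassign — is determined by $B_R$, $R$, $C$, and $\widehat M^{-1}$, all of size $O(k)$, the update is first-order definable over $(\le, +, \times)$.

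\smallskip

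\noindent\textbf{Main obstacle.} The delicate part is neither identifying the affected vectors nor writing the update formula, but proving that the four-step pivot actually produces an $A'$-good basis when executed simultaneously rather than one change at a time: in the single-change proof of \cite{DKMSZ15}, the three auxiliary steps rely on the fact that at most one coordinate can be ``spoiled'' at a time, so that each pivot is against a $1 \times 1$ block. With $k$ simultaneous changes, the pivots are against a genuine $O(k) \times O(k)$ block, and one must verify that after the combined update (a) every surviving non-kernel vector is again $i$-unique for some fresh $i$, (b) no two of the new vectors claim the same principal component, and (c) the rank is preserved. The verification follows from the Schur-complement view of the pivot: the fact that $\widehat M$ has full rank exactly matches the statement that the coordinate block in $R$ of the updated basis is a permutation of $R$, which is the multi-dimensional analogue of the single-swap invariant in \cite{DKMSZ15}. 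Conversely, if no such $\widehat M$ of the right rank exists, one concludes that some basis vectors have genuinely dropped into $K'$ and handles them in the kernel step. Because $\widehat M$ has dimension $O(k)$, both its rank determination and its inversion land in $\FOar$ via \cite{DMVZ18}, giving the claimed bound.
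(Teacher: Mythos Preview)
Your high-level plan is in the right spirit—localize the damage, invert a small block, repair principal components—but two concrete ingredients from the paper's proof are missing, and without them the argument does not go through.

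\textbf{Gap 1: the repair is not confined to the $|R|\times|B_R|$ block.} You assume that the full-rank submatrix $\widehat M$ can be found inside the columns indexed by $B_R$ (plus some vague ``columns $e_j$ for $j\in C$''), and that any vector not handled there must have dropped into $K'$. This is false on both counts. The paper's Phase~1 selects its pivot columns $C_1$ as a column basis of the full $k\times n$ matrix $(A'B)_{R_0,*}$, and $C_1$ need not lie inside $B_R$: take $v\in B_R$ with $\mathrm{pc}(v)=i\in R$ such that $(A'v)_i=0$ while $(A'u)_i\neq 0$ for some $u\notin B_R$. More seriously, such a $v$ can satisfy $(A'v)_R=0$ yet $A'v\neq 0$ (since $(A'v)_j=(Av)_j$ for $j\notin R$ and $Av$ may be nonzero off its pc). Then $v$ has lost its pc, is \emph{not} in $K'$, and needs a \emph{new} pc in some row outside $R$. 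This is exactly what the paper's Phase~2 does: it examines the $n\times|\tilde C_2|$ matrix $\tilde M^{(1)}_{*,\tilde C_2}$ and extracts a fresh row basis $R_2\subseteq[n]$ disjoint from $R_1$. Your Step~4 (``move to the kernel block'') would wrongly place $v$ in the kernel.

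\textbf{Gap 2: handling $k\times n$ matrices needs the small-prime trick, not just \cite{DMVZ18}.} Because of Gap~1, one must find row/column bases of $k\times n$ and $n\times k$ matrices over $\mathbb Z_p$ with $p=O(n^3)$. The \cite{DMVZ18} determinant routine only handles $k\times k$ matrices; it does not let you search over $\binom{n}{k}$ column subsets. The paper resolves this with a second reduction (Claims~\ref{clm:smallPrime}--\ref{clm:colBasis}): pass to a prime $q=O(\log^3 n)$ that preserves the rank of the relevant $k$-row matrix, so that $|\mathbb Z_q^{k}|=q^{k}=n^{O(1)}$ and one can brute-force all linear combinations of the $k$ rows to find a row basis, then greedily extract a column basis by computing ranks of $k\times i$ prefixes in $\AC^0$. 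Your plan never invokes this reduction, and without it the basis-finding steps are not in $\FOar$.
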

%
\begin{proposition}[\cite{DKMSZ18}]\label{prop:rankAgood}
Let $A\in\mathbb{Z}_p^{n\times n}$ and $B$ an $A$-good basis of 
$\mathbb{Z}_p^n$. Then $rank(A) = n - |B \cap K|$ is the number of
vectors in the basis that have a pc.
\end{proposition}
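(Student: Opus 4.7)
The plan is to show that $\{Av : v \in B - K\}$ is a basis of the image (column space) of $A$; once this is established, $\mathrm{rank}(A) = |B - K| = n - |B \cap K|$ is immediate because $|B| = n$.

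For the spanning direction, I would use that $B$ is a basis of $\mathbb{Z}_p^n$: any $y$ in the image of $A$ can be written as $y = Ax$ with $x = \sum_{v \in B} c_v v$, so $y = \sum_{v \in B} c_v (Av)$. The contributions from $v \in B \cap K$ vanish since $Av = 0$, so $\{Av : v \in B - K\}$ already spans the image of $A$.

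The substantive step is linear independence. Suppose we have a relation $\sum_{v \in B - K} c_v (Av) = 0$. Fix an arbitrary $v_0 \in B - K$ and let $i = \mathrm{pc}(v_0)$. By $A$-goodness, $v_0$ is $i$-unique, meaning $(Av_0)_i \neq 0$ while $(Aw)_i = 0$ for every other $w \in B$; in particular this holds for every other $w \in B - K$. Reading off the $i$-th coordinate of the relation then collapses it to $c_{v_0} (Av_0)_i = 0$. Since $\mathbb{Z}_p$ is a field and $(Av_0)_i \neq 0$, we conclude $c_{v_0} = 0$, and since $v_0$ was arbitrary all coefficients vanish.

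Combining the two, $\{Av : v \in B - K\}$ is a basis of the image of $A$, so $\mathrm{rank}(A) = |B - K| = n - |B \cap K|$, which is exactly the count of basis vectors with a principal component. There is no real obstacle: the argument is a short exercise in linear algebra over a field. The only delicate point is bookkeeping in the independence step — one must remember that the $i$-uniqueness condition quantifies over \emph{all} other $w \in B$, not merely those in $B - K$, so the rewriting at coordinate $\mathrm{pc}(v_0)$ is valid even though our linear combination ranges only over $B - K$.
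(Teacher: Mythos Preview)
Your proof is correct; the argument that $\{Av : v \in B - K\}$ spans the image by expanding an arbitrary preimage in the basis $B$, together with the coordinate-wise independence check using $i$-uniqueness at $i = \mathrm{pc}(v_0)$, is exactly the right linear-algebra verification. Note that the paper does not actually give a proof of this proposition: it is quoted from \cite{DKMSZ18} and used as a black box, so there is no in-paper argument to compare against.
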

\begin{claim}\label{clm:prime}
If the rank of an $n \times n$ matrix $A$ is $r$ then there exists a prime $p = O(max(n,\log N)^3)$ such that the rank of $A$ over $\mathbb{Z}_p$ is also $r$, where $N$ is the maximum absolute value the entries of the matrix $A$ contain.
\end{claim}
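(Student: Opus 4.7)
The plan is a standard ``good prime exists by counting'' argument. First I would observe the easy direction: for any prime $p$, reducing $A$ modulo $p$ cannot \emph{increase} the rank, because every $(r+1)\times(r+1)$ minor of $A$ vanishes over $\mathbb{Z}$ and therefore also modulo $p$. So it suffices to produce a prime $p = O(\max(n,\log N)^3)$ for which some $r\times r$ submatrix retains nonzero determinant mod $p$.

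Next I would fix a witness. Since $\operatorname{rank}(A) = r$, there is some $r\times r$ submatrix $M$ of $A$ with $D := \det M \neq 0$. I would bound $|D|$ by Hadamard's inequality: $|D| \leq r^{r/2} N^r$, so $\log_2 |D| \leq \tfrac{r}{2}\log_2 r + r \log_2 N = O(n\log n + n\log N)$. Since the number of distinct prime divisors of any nonzero integer $D$ is at most $\log_2 |D|$, the integer $D$ has at most $O(n(\log n + \log N))$ ``bad'' primes.

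Finally I would invoke a standard explicit prime-counting bound (e.g., Rosser) in the form $\pi(M) \geq M / (2 \ln M)$ for $M$ sufficiently large, and set $M = c\cdot \max(n,\log N)^3$. The goal is to check $\pi(M) > \log_2 |D|$, which splits into two regimes: when $n \geq \log N$, the right side is $O(n \log n)$ and $M/\log M = \Omega(n^3/\log n)$, which dominates; when $n < \log N$, the right side is $O((\log N)^2)$ and $M/\log M = \Omega((\log N)^3 / \log\log N)$, which again dominates. Picking $c$ large enough uniformly in both cases yields a prime $p \leq M$ not dividing $D$, so $\det M \not\equiv 0 \pmod p$ and $\operatorname{rank}_{\mathbb{Z}_p}(A) \geq r$.

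The only mildly delicate point will be the two-regime case analysis and choosing the constant $c$ so that $M/(2\ln M)$ strictly exceeds the Hadamard-based bound in both; this is a routine comparison of polynomial versus poly-logarithmic growth and should not cause any genuine obstacle.
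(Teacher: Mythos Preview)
Your proposal is correct and follows essentially the same argument as the paper: pick an $r\times r$ nonzero minor, bound its magnitude (the paper uses the cruder $n!N^n$ instead of Hadamard, yielding the same $O(n(\log n + \log N))$ bit-length), note that it has at most that many prime divisors, and invoke the prime number theorem to find a good prime of the stated size. One nitpick: in your first regime ($n\geq \log N$) the right-hand side is $O(n^2)$ rather than $O(n\log n)$, but $n^3/\log n$ still dominates, so the conclusion is unaffected.
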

\begin{claimproof}
We know that if the rank of $A$ is $r$ then there exists a $r \times r$ submatrix $A_s$ of $A$ such that its determinant is nonzero. The value of this determinant is at most $n!N^n$, which can be represented by $O(n(\log n + \log N))$ many bits. Therefore, this determinant is divisible by at most $O(n(\log n + \log N))$ many primes. Thus by the prime number theorem, we can say that for a large enough $n$ there exists a prime $p$ of magnitude $O(max(n,\log N)^3)$ such that determinant of $A_s$ is not divisible by $p$. 
\end{claimproof}
Hence, to compute the rank of $A$, it is sufficient to compute the rank of the matrices $(A$ mod $p)$ for all primes $p$ of size $O(max(n,\log N)^3)$ and take the maximum among them. Below we show how to maintain the rank of the matrix $A\bmod{p}$ for a fixed prime $p$. We replicate the same procedure for all the primes in parallel. 
%
%

$A'$ is the matrix that is obtained by changing $k$ many entries of $A$. Notice that if $B$ is not $A'$-good basis, that means there are some vectors in $B-K'$ which are not $i$-unique with respect to $B$ and $A'$, where $K'$ is the kernel of $A'$. A vector $w \in B-K'$  which was $i$-unique $(i \in [n])$ with respect to $A$ and $B$ may no longer be $i$-unique with respect to $A'$ and $B$ for the following two reasons,
(i) $i \notin S'(w)$,  (ii) there may be more than one vector $w'$ such that $i \in S'(w')$.
For a vector $v$, $S'(v)$ denotes the set of non-zero coordinates of the vector $A'v$. Below we give an $\AC^0$ algorithm 
to construct an $A'$-good basis. 
\subsection{ Construction of an $A'$-good basis}
Let $k = O (\frac{\log n}{\log \log n} )$ and $M =(A'B)_{R,C}$ be the
$n \times n$ matrix where $R$ is the set of rows and $C$ is the set of columns
of $M$.  We know that $A'B$ differs from $AB$ in a set $R_0$ of at
 most $k$ many rows.
Let $M_{R_0,*}$ be the matrix $M$ restricted to the rows in the set $R_0$.
\begin{claim} \label{clm:smallPrime}
 There exists a prime $q = O(\log^3 n)$ such that the rank of $(M_{R_0,*}\bmod{q})$ is equal to the rank of  $(M_{R_0,*} \bmod{p})$.
\end{claim}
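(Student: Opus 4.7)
The plan is to prove the claim by a counting argument based on the prime number theorem applied to minors of $M_{R_0,*}$. The starting observation is that $M_{R_0,*}$ has only $|R_0|\le k=O(\log n/\log\log n)$ rows, and once its entries are lifted canonically to integers in $\{0,\ldots,p-1\}$ they are bounded by $p = O(n^3)$. Hadamard's bound then guarantees that every $r\times r$ minor with $r\le k$ is an integer of absolute value at most $r^{r/2}p^r \le 2^{O(k\log n)} = 2^{O(\log^2 n/\log\log n)}$, so any nonzero such minor has only $O(\log^2 n/\log\log n)$ distinct prime divisors.

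Let $r=\mathrm{rank}(M_{R_0,*}\bmod p)$ and fix an $r\times r$ submatrix $D$ witnessing this, so that $\det D\not\equiv 0\pmod p$ and in particular $\det D$ is a nonzero integer of magnitude $2^{O(\log^2 n/\log\log n)}$. By the prime number theorem, the number of primes in $[2,c\log^3 n]$ is $\Theta(\log^3 n/\log\log n)$, which for a suitably large constant $c$ strictly exceeds the number of prime divisors of $\det D$. I would pick $q\le c\log^3 n$ prime with $q\ne p$ and $q\nmid\det D$; immediately $\det D\not\equiv 0\pmod q$, giving $\mathrm{rank}(M_{R_0,*}\bmod q)\ge r$.

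For the matching upper bound I would further restrict $q$ to also avoid the ``bad'' primes $\ell$ for which the rank of the integer lift of $M_{R_0,*}$ over $\mathbb{Z}_\ell$ is less than its rational rank $r^*$. By a standard Smith-normal-form argument, the bad primes are exactly the prime divisors of the $r^*$-th determinantal divisor of $M_{R_0,*}$, which itself divides some $r^*\times r^*$ minor and is hence bounded by the same Hadamard estimate $2^{O(\log^2 n/\log\log n)}$. So the bad set has size $O(\log^2 n/\log\log n)$, still leaving room under the PNT count to pick $q$ avoiding both this set and the prime divisors of $\det D$. Since $p$ itself is chosen (via Claim~\ref{clm:prime}) to be a non-bad prime for the relevant submatrices, $r=r^*$; and since the chosen $q$ is neither bad nor a divisor of $\det D$, we conclude $\mathrm{rank}(M_{R_0,*}\bmod q) = r^* = r = \mathrm{rank}(M_{R_0,*}\bmod p)$.

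The main delicacy I foresee is the upper-bound direction: the lower bound reduces cleanly to the single integer $\det D$ and its few prime divisors, but equality further requires both $p$ and $q$ to preserve the rational rank of $M_{R_0,*}$. This ``generic prime'' condition is obtainable by the same Hadamard-based counting, but must be made uniform across all relevant $r^*\times r^*$ minors rather than just across the full matrix $A$ -- an issue that is handled by choosing $p$ large enough so that Claim~\ref{clm:prime} applies simultaneously to the submatrices arising in the update procedure.
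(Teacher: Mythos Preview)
Your lower-bound argument (Hadamard bound on an $r\times r$ minor, then the prime number theorem) is exactly what the paper's one-line proof intends by ``follows from the proof of Claim~\ref{clm:prime}.'' You are also right to flag the upper-bound direction as the delicate one: obtaining $\mathrm{rank}(M_{R_0,*}\bmod q)\le r$ requires the $\mathbb Z_p$-rank $r$ of $M_{R_0,*}$ to coincide with the rational rank $r^*$ of its integer lift, and the paper's proof is silent on this point.

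Your proposed resolution, however, does not go through. You assert that ``$p$ itself is chosen (via Claim~\ref{clm:prime}) to be a non-bad prime for the relevant submatrices'' and that this can be arranged ``by choosing $p$ large enough.'' But Claim~\ref{clm:prime} only controls $p$ relative to the \emph{input} matrix $A$; it says nothing about $M_{R_0,*}=(A'B)_{R_0,*}$, whose integer lift depends on the auxiliary basis $B$, which is itself computed over $\mathbb Z_p$ and evolves with the update history. Concretely, a $2\times 2$ lift with entries $1,\,2,\,(p{+}1)/2,\,1$ has determinant $-p$, giving $r=1$, $r^*=2$, and rank $\bmod\ q$ equal to $2$ for every prime $q\neq p$; since any admissible $q=O(\log^3 n)$ is far smaller than $p=O(n^3)$, no such $q$ yields equality. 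Moreover, the surrounding procedure is run in parallel for \emph{every} prime $p$ in the stated range, so one cannot retroactively tailor $p$ to a matrix $M_{R_0,*}$ that materialises only after several updates. Thus the subtlety you identified is genuine, but the justification you sketch does not close it.
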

\begin{claimproof}
The proof follows from the proof of Claim \ref{clm:prime}. 
\end{claimproof}
From the above claim, it follows that a row basis of $(M_{R_0,*}\bmod{p})$
remains a row basis of $(M_{R_0,*}\bmod{q})$ for some
$O(\log{\log{n}})$-bit prime $q$.
 Next, we have two constructive claims: 
\begin{claim}\label{clm:rowBasis}
A row basis $R_1$ of ($M_{R_0,*}\bmod{q})$ can be found in $\ACz$.
\end{claim}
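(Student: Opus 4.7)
The plan is to build $R_1$ greedily by row index: declare $i \in R_1$ iff row $i$ of $M_{R_0,*}$ is not a $\mathbb{Z}_q$-linear combination of the rows indexed by $\{j \in R_0 : j < i\}$. Standard inductive arguments then give that $R_1$ is linearly independent (each chosen row is independent of its predecessors in $R_1$) and spans the row space of $M_{R_0,*}$ (each excluded row lies in the span of earlier rows of $R_0$, which by induction lie in the span of $R_1$); hence $R_1$ is a row basis.

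The key enabling observation is that, since $q = O(\log^3 n)$ (from Claim~\ref{clm:smallPrime}) and $|R_0| \le k = O(\log n/\log\log n)$,
\[
q^{|R_0|} \le (\log^3 n)^{O(\log n/\log\log n)} = 2^{O(\log n)} = n^{O(1)},
\]
so the set of candidate coefficient vectors in $\mathbb{Z}_q^{R_0}$ has polynomial size. This enables a brute-force enumeration in $\ACz$: for each $i \in R_0$, enumerate all $\vec{\alpha} \in \mathbb{Z}_q^{|R_0 \cap [1,i-1]|}$ and, for each, check in parallel over every column $c \in [n]$ whether
\[
\sum_{j \in R_0,\, j < i} \alpha_j\, M_{j,c} \equiv M_{i,c} \pmod q,
\]
declaring $i \notin R_1$ iff some $\vec{\alpha}$ satisfies this simultaneously for all $c$.

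The resulting circuit is of constant depth: a negation on top of an OR (of polynomial fan-in) over coefficient vectors, of an AND (of fan-in $n$) over columns, of a small arithmetic test. Each such test is a sum of at most $k$ products of $O(\log\log n)$-bit numbers, reduced modulo $q$; the whole gadget has at most $O(\log n)$ input bits and $O(\log\log n)$ output bits. The main obstacle is implementing this inner small-arithmetic gadget inside $\ACz$, which I would handle by the same small iterated sum-of-products routine used for the small-matrix determinants in \cite{DMVZ18}; beyond that, keeping the outer enumeration polynomial in size relies precisely on the bounds $q = O(\log^3 n)$ and $k = O(\log n/\log\log n)$.
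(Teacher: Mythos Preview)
Your proposal is correct and rests on the same key observation as the paper: because $|R_0|=O(\log n/\log\log n)$ and $q=O(\log^3 n)$, there are only $n^{O(1)}$ candidate coefficient vectors (and subsets), so brute-force enumeration fits in $\ACz$. The paper extracts $R_1$ by enumerating all subsets of $R_0$ and testing each for independence via all nonzero $\mathbb{Z}_q$-linear combinations, then taking a maximum-cardinality independent subset; your per-row greedy test is a minor variant of the same idea that additionally yields a canonical (lex-first) basis.
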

\begin{claimproof}
Note that number of rows in $R_0$ are $O (\frac{\log n}{\log \log n})$; thus, the number of the subsets of the rows of $R_1$ are polynomially  many (in $n$), and each row in the set $R_0$ can be indexed by $O(\log \log n)$ many bits. An element of $\mathbb{Z}_q$ can also be represented by $O(\log \log n)$ many bits. Therefore, for a fixed subset $S$ of $R_0$, all the linear combinations of the rows of $S$ can be represented by $O(\log n)$ many bits. We try all the linear combinations in parallel. Also, we do this for all the subsets in parallel. The subset with the maximum cardinality in which all the linear combinations result in non-zero values will be the maximum set of linearly independent rows in ($M_{R_0,*}$ mod $q)$.
\end{claimproof}
\begin{claim}\label{clm:colBasis}
A column basis $C_1$ of $(M_{R_1,*}\bmod{q})$ can be found in $\ACz$.
\end{claim}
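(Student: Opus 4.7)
The plan is to pick $C_1$ greedily from left to right, including exactly those column indices at which the rank strictly increases. Concretely, I will set
\[
C_1 \;=\; \bigl\{ j \in [n] \;:\; \mathrm{rank}(M_{R_1,[j]}) \;>\; \mathrm{rank}(M_{R_1,[j-1]}) \bigr\},
\]
with the convention $\mathrm{rank}(M_{R_1,\emptyset}) := 0$. A routine argument then shows that this $C_1$ is a column basis of $M_{R_1,*}$: its size equals $\mathrm{rank}(M_{R_1,*})$; the chosen columns are linearly independent because each one extends the rank by one; and any column $j' \notin C_1$ lies in the span of earlier columns of $M_{R_1,*}$, hence (by induction on $j$) in the span of the earlier members of $C_1$.

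The substance is therefore to show that $\mathrm{rank}(M_{R_1,[j]})$ is $\ACz$-computable for every $j$. I plan to exploit the crucial asymmetry that $|R_1| \leq k = O(\log n/\log\log n)$ is polylogarithmic. Using row-rank $=$ column-rank, the rank is simply the size of a largest subset $R' \subseteq R_1$ whose corresponding rows $\{M_{r,[j]}\}_{r \in R'}$ are linearly independent. Since there are only $2^{|R_1|} = n^{O(1/\log\log n)}$ subsets of $R_1$, i.e.\ polynomially many, I can enumerate them in parallel by a bounded first-order quantifier, encoding $R'$ as a $|R_1|$-bit mask. For a fixed $R'$, linear independence is tested exactly as in the proof of Claim~\ref{clm:rowBasis}: I universally quantify over non-zero coefficient vectors $c \in \mathbb{Z}_q^{R'} \setminus \{0\}$ (of which there are at most $q^{|R_1|} = \mathrm{poly}(n)$), existentially over a column $j' \in [j]$, and check whether $\sum_{r \in R'} c_r M_{r,j'} \not\equiv 0 \pmod q$.

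The only potentially delicate point is showing that this inner sum is $\ACz$-computable. It is: each coefficient $c_r$ and each matrix entry $M_{r,j'}$ has $O(\log\log n)$ bits, and there are at most $|R_1| = O(\log n/\log\log n)$ summands, so the entire expression is a function of only $O(\log n)$ input bits, and any such function admits a polynomial-size depth-two circuit; it is thus in $\ACz$ (indeed $\FOar$-definable). Taking the maximum of the resulting polynomially many rank candidates, each at most $|R_1|$ and hence representable in $O(\log\log n)$ bits, and then comparing the two ranks at $j-1$ and $j$, is then routine in $\ACz$.

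The main obstacle I foresee is not conceptual but bookkeeping: confirming that the enumerations over subsets $R'$ and coefficient vectors $c$ are bona fide bounded first-order quantifiers over integers of polynomial magnitude, and that every intermediate step (bit extraction, mod-$q$ arithmetic, the inner summation) fits within the $\ACz$ budget. These concerns all dissolve because every object we manipulate is polylogarithmic in bit-length, keeping us safely in the polynomial-size input regime.
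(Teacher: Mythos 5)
Your proof takes the same greedy approach as the paper (select column $j$ iff the rank of $M_{R_1,[j]}$ strictly exceeds that of $M_{R_1,[j-1]}$) and is correct; the paper's own proof simply asserts that these rank tests are doable in $\ACz$, whereas you usefully make the mechanism explicit — namely, reducing to row-rank over the polylogarithmically many rows of $R_1$, enumerating the $2^{|R_1|}=n^{o(1)}$ row-subsets and the $q^{|R_1|}=\text{poly}(n)$ coefficient vectors exactly as in the proof of Claim~\ref{clm:rowBasis}, and noting that the inner mod-$q$ sum depends on only $O(\log n)$ input bits.
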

\begin{claimproof}
To find the maximum set of linearly independent columns in the matrix $M_{R_1,*}$ we just check in parallel if the rank of  $M_{R_1,i}$ is greater than the rank of $M_{R_1,i-1}$ for all $i \in [m]$. Let $c_1,c_2 \ldots c_m$ be the columns in the matrix $M_{R_1,*}$. Note that the set of columns $c_i$ such that the rank of $M_{R_1,i}$ is more than the rank of $M_{R_1,i-1}$, form a maximum set of linearly independent columns in $M_{R_1,i}$. We can check this in $\ACz$.  
\end{claimproof}
%
%
We are going to construct four matrices $D^{(1)},E^{(1)},D^{(2)},E^{(2)}$
successively such that the product 
$B' = B \times D^{(1)}\times E^{(1)}\times D^{(2)}\times E^{(2)}$ is an
$A'$-good basis. For this, we need to show that each column $c_i$ of $A'B'$ is 
either an all zero-column or there exists a unique $j$ such that the $j$-th
entry of the column is non-zero. In other words, each column of $B'$ is either 
$i$-unique or it is in the kernel of $A'$. We will show how to obtain each
of the four matrices above as well as take their product in $\ACz$.
We need a technical lemma before we start.
\subparagraph*{Combining Matrices}
Here we state a lemma about constructing matrices from smaller matrices that we
we will use several times.
Let $X\in\mathbb{Z}_p^{n\times n}$ be a matrix and
let $X^{1,1} \in \mathbb{Z}_p^{\ell\times\ell}$,
$X^{1,2} \in \mathbb{Z}_p^{\ell\times (n-\ell)}$,
$X^{2,1} \in \mathbb{Z}_p^{(n-\ell)\times \ell}$,
$X^{2,2} \in \mathbb{Z}_p^{(n-\ell)\times(n-\ell)}$ be $4$ matrices
and let $R,C \subseteq [n]$ be two subsets of indices of cardinality $\ell$
each. Let $\bar{R} = [n]\setminus R, \bar{C} = [n]\setminus C$. Then we have (see Appendix \ref{subsec:rankApp} for the proof):
\begin{lemma}\label{lem:combineMatrices}
Given the matrices $X^{i,j}$ for $I,J \in [2]$ and the sets $R,C$ explicitly
for $|R| = |C| = \ell = (\log{n})^{O(1)}$,
we can construct, in $\AC^0$,  the matrix $Y$ such that $Y_{R,C} = X^{1,1}$,
$Y_{R,\bar{C}} = X^{1,2}$, $Y_{\bar{R},C} = X^{2,1}$ and
$Y_{\bar{R},\bar{C}} = X^{2,2}$.
\end{lemma}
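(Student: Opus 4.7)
My plan is to construct the matrix $Y$ entry-by-entry in parallel: for each $(i,j)\in[n]\times[n]$, decide to which of the four quadrants $(i,j)$ belongs, compute the local coordinates of $(i,j)$ inside that quadrant, and then read off the corresponding entry of $X^{a,b}$. The quadrant label $(a,b)\in\{1,2\}^2$ is obtained by the two membership tests $i\in R$ and $j\in C$ against the explicitly given sets $R,C$; since $R$ and $C$ have polylogarithmic size, these tests sit trivially in \ACz.

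The non-trivial step is the local coordinate computation. Using the natural ordering of $[n]$, let $\rho_R(i):=|\{r\in R: r\le i\}|$. If $i\in R$ the local row index is $\rho_R(i)$; if $i\in\bar R$ the local row index is $i-\rho_R(i)$, i.e., the position of $i$ in the natural ordering of $\bar R$. Analogous formulas via $\rho_C(j)$ yield the local column index. Because $|R|=|C|=\ell=(\log n)^{O(1)}$, the value $\rho_R(i)$ is bounded by $\ell$ and fits in $O(\log\log n)$ bits. With the explicit listing of $R$ and the built-in arithmetic available in \FOar, the small symmetric computation $\rho_R(i)$ is realizable by a constant-depth, polynomial-size circuit: for each of the $\ell+1$ possible outcomes $k$, the predicate ``$\rho_R(i)=k$'' depends only on the collection of $\ell$ bits $\{[r\le i]:r\in R\}$ and is expressible in \ACz\ given the explicit listing of $R$ and the arithmetic predicates.

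Once the quadrant label $(a,b)$ and local indices $(i',j')$ are in hand, the output is simply $Y_{i,j}=X^{a,b}_{i',j'}$, realized as an \ACz\ selection among the four input matrices keyed by $a,b,i',j'$. Performing all $n^2$ entry computations in parallel gives $Y$ as a single constant-depth, polynomial-size circuit, proving the lemma. The main conceptual obstacle is the rank/count $\rho_R(i)$ -- counting large collections of bits is not in \ACz\ in general -- but the polylogarithmic bound on $|R|,|C|$ together with the built-in arithmetic of \FOar\ keeps this small symmetric computation within the required complexity bound.
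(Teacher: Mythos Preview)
Your proposal is correct and follows essentially the same approach as the paper's proof: both compute, for each index $i$, the rank $\rho_R(i)=|\{r\in R:r\le i\}|$ (the paper calls this $pos_R$ or $|R(r')|$), use it directly as the local coordinate when $i\in R$ and as $i-\rho_R(i)$ when $i\in\bar R$, and justify the $\ACz$ bound by noting that this rank is a sum of only $\ell=(\log n)^{O(1)}$ many bits. The final lookup step is identical as well.
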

\subparagraph*{Phase 1}
First, we restore the $i$-uniqueness of the columns indexed by the set $C_1$.
Let $R_{C_1}$ be the set of rows in $R$ indexed by the same set of indices
 as $C_1$ in $C$.
We right multiply $M$ with another matrix
$D^{(1)}\in \mathbb{Z}_p^{n \times n}$ such that $D^{(1)}_{R_{C_1},C_1}$ is the
inverse of $M_{R_1,C_1}$ and $D^{(1)}_{R-R_{C_1},C-C_1}$ is the identity matrix
and all the other entries of $D^{(1)}$ are zero. Since the inverse of a
$k \times k$ matrix can be computed in $\AC^0$ \cite{DKMTVZ20},
matrix $D^{(1)}$ can be obtained in $\ACz$ via Lemma~\ref{lem:combineMatrices}.

Let ${M^{(1)}} = M \times D^{(1)}$, note that
${M^{(1)}}_{R_1,C_1}$ is an identity matrix. Since $M = A' \times B$,
we have
${M^{(1)}} = M \times D^{(1)}= A' \times B \times D^{(1)}.$
Note that since ${M^{(1)}}_{R_1,C_1}$ is an identity matrix,  the vectors corresponding to the columns in $C_1$ in the matrix $B \times D^{(1)}$ can now easily be made $i$-unique. Since ${M^{(1)}}_{R_1,C_1}$ is an identity matrix, all columns in the matrix ${M^{(1)}}_{R_1, C-C_1}$ can be written as the linear combinations of columns of ${M^{(1)}}_{R_1,C_1}$. Let $\tilde{M}^{(1)}$ be a matrix defined as $\tilde{M}^{(1)} = {M^{(1)}} \times E^{(1)}$, where $E^{(1)} \in \mathbb{Z}_p^{n \times n}$ is constructed as follows.
(i) $E^{(1)}_{R_{C_1},*}$ is same as $M^{(1)}_{R_1,*}$.
(ii) $E^{(1)}_{R- R_{C_1},C_1}$ is the zero matrix.
(iii) $E^{(1)}_{R-R_{C_1},C-C_1}$ is the negative identity matrix.
 Using Lemma~\ref{lem:combineMatrices}, we can construct $E^{(1)}$ in $\ACz$.
Note that $\tilde{M}^{(1)}_{R_1,C_1}$ is an identity matrix and
$\tilde{M}^{(1)}_{R_1,C-C_1}$ is a zero matrix. Thus we can say that vectors
corresponding to columns in $C_1$ in the matrix
$B \times D^{(1)} \times  E^{(1)}$ are $i$-unique for some $i\in R_1$.

Next, we perform a procedure similar to Phase 1 for those vectors which lost
their pc's when we changed the matrix from $A$ to $A'$,
i.e. those vectors $w$ which were $i$-unique for some $i$, but $i \notin S'(w)$.\subparagraph*{Phase 2}
 There can be at most $k$ vectors which lost their pc's while changing the
matrix from $A$ to $A'$. Some of these vectors might get their pc's set in
Phase 1.  Let $\tilde{C}_2$ be the remaining set of vectors in $B$.
Notice $C_1 \cap \tilde{C}_2$ is empty.
To set the pc's of these vectors, we repeat the above procedure for the matrix
$\tilde{M}^{(1)}_{*,\tilde{C}_2}$ as follows.

Find the column basis $C_2$ of $\tilde{M}^{(1)}_{*,\tilde{C}_2}$ in $\AC^0$
recalling  that $|\tilde{C}_2| \leq k$ and using Claim~\ref{clm:rowBasis}
on the transpose of $\tilde{M}^{(1)}_{*,\tilde{C}_2}$. By considering
the transpose of $\tilde{M}^{(1)}_{*,{C}_2}$ and applying
Claim~\ref{clm:colBasis} we can get a row basis $R_2$ of
$\tilde{M}^{(1)}_{*,{C}_2}$. Notice that $R_1 \cap R_2$ is empty.
We construct a matrix $D^{(2)}\in \mathbb{Z}_p^{n\times n}$ in $\AC^0$
using Lemma~\ref{lem:combineMatrices} such that
 $D^{(2)}_{{R_{C_2}},{C_2}}$ contains the inverse of
$\tilde{M}^{(1)}_{{R_2},{C_2}}$, $ D^{(2)}_{R -{R_{C_2}},C -{C_2}}$ is
an identity matrix and the rest of the entries of $D^{(2)}$ are zero.
Here $R_{C_2}$ is the set of rows indexed by the same indices as in the
set $C_2$ of columns.
Let $\tilde{M}^{(2)} = \tilde{M}^{(1)} \times D^{(2)} \times E^{(2)}$, where
$E^{(2)}$ is constructed in $\AC^0$ (using Lemma~\ref{lem:combineMatrices})
so that:
(i)  $\tilde{M}^{(2)}_{R_1 \cup {R_2},C_1 \cup {C_2}}$ is the identity matrix.
(ii) $\tilde{M}^{(2)}_{R_1,C-C_1}$ and $\tilde{M}^{(2)}_{R-R_1,\tilde{C}_2-{C_2}}$ are zero matrices.
Finally, we have $B' = B \times D^{(1)} \times E^{(1)} \times D^{(2)} \times E^{(2)}$.
Since each column of the newly constructed matrices
 contains at most $(k+1)$ non-zero entries, we can obtain $B'$ in $\ACz$.
\begin{claim}\label{clm:AmodGood}
$B'$ is an $A'$-good basis.
\end{claim}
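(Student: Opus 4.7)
The plan is to verify, for every column $c \in [n]$ of $\tilde{M}^{(2)} = A'B'$, that either $\tilde{M}^{(2)}_{*,c} = 0$ (so the $c$-th column of $B'$ lies in the kernel $K'$ of $A'$) or some row $i$ satisfies $\tilde{M}^{(2)}_{i,c} \neq 0$ while $\tilde{M}^{(2)}_{i,c'} = 0$ for all $c' \neq c$. Since $C_2 \subseteq \tilde{C}_2$ and $C_1 \cap \tilde{C}_2 = \emptyset$, the column set partitions into $C_1$, $C_2$, $\tilde{C}_2 \setminus C_2$, and $C \setminus (C_1 \cup \tilde{C}_2)$, and I would handle the four cases separately.

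For $c \in C_1$, the row $i \in R_1$ matched with $c$ in the identity block (i) supplies the pc: the identity handles columns in $C_1 \cup C_2$, and the zero block $\tilde{M}^{(2)}_{R_1, C \setminus C_1} = 0$ from (ii) handles the rest. For $c \in \tilde{C}_2 \setminus C_2$, the inclusion $\tilde{C}_2 \setminus C_2 \subseteq C \setminus C_1$ (from $C_1 \cap \tilde{C}_2 = \emptyset$) together with both zero blocks of (ii) forces $\tilde{M}^{(2)}_{*,c} = 0$, placing $c$ in $K'$. For $c \in C_2$, the row $i \in R_2$ matched with $c$ handles $C_1 \cup C_2$ via identity and $\tilde{C}_2 \setminus C_2$ via the second zero block; the remaining columns $c' \in C \setminus (C_1 \cup \tilde{C}_2) \subseteq C \setminus C_2$ are handled by a direct computation from the explicit form of $E^{(2)}$ (constructed in analogy with $E^{(1)}$): writing $\hat{M} = \tilde{M}^{(1)} D^{(2)}$ so that $\hat{M}_{R_2, C_2} = I$, the copy of $\hat{M}_{R_2,*}$ placed in the $R_{C_2}$-rows of $E^{(2)}$ exactly cancels the $-I$ in its $R \setminus R_{C_2}$-rows, yielding $\tilde{M}^{(2)}_{R_2, C \setminus C_2} = 0$.

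The delicate case is $c \in C \setminus (C_1 \cup \tilde{C}_2)$: here $c$ corresponds to a column of $B$ whose pc $j_c$ survived the transition $A \to A'$, so $j_c \notin R_0$ and row $j_c$ of $M = A'B$ has a unique nonzero entry $M_{j_c, c} \neq 0$ at column $c$. My plan is to compute row $c$ of the product $D^{(1)} E^{(1)} D^{(2)} E^{(2)}$ directly: because $c \notin C_1 \cup C_2$ places $c$ outside both $R_{C_1}$ and $R_{C_2}$, row $c$ of each $D^{(i)}$ lies in its identity block and equals $e_c^T$, while row $c$ of each $E^{(i)}$ lies in its $-I$ block and equals $-e_c^T$; the two sign flips cancel and the product row is $e_c^T$. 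Therefore row $j_c$ of $\tilde{M}^{(2)} = M_{j_c,*} \cdot D^{(1)} E^{(1)} D^{(2)} E^{(2)}$ equals $M_{j_c, c} \cdot e_c^T$, exhibiting $j_c$ as the pc of $c$. The main obstacle throughout is keeping the several indexings coherent---the identity-block bijection $R_1 \leftrightarrow C_1$ of $M^{(1)}_{R_1, C_1} = I$, the same-index identification of $R_{C_1}$ with $C_1$ that places the inverse inside $D^{(1)}$, and their Phase~2 analogs---but once these are pinned down the block-wise calculations go through routinely.
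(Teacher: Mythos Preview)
Your four-way case split on the columns of $\tilde M^{(2)}=A'B'$ is considerably more explicit than the paper's own argument, which is a brief sketch, and your treatment of $C_1$, $C_2$ and $\tilde C_2\setminus C_2$ via the block structure (i)--(ii) is sound. The gap is in the delicate case $c\in C\setminus(C_1\cup\tilde C_2)$: you assert that such a $c$ had a principal component $j_c$ with respect to $A$ and that $j_c\notin R_0$, so that row $j_c$ of $M=A'B$ has its unique nonzero at column $c$. Neither conclusion follows from $c\notin\tilde C_2$. As the paper defines it, $\tilde C_2$ contains only those columns that were $i$-unique with respect to $A$ but have $i\notin S'(w)$; it therefore excludes both columns that were in $K$ but left $K'$, and columns whose old pc $j_c$ lies in $R_0$ with $M_{j_c,c}\neq 0$ while row $j_c$ of $M$ now carries further nonzeros. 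A column of the latter kind need not land in $C_1$, and for it your row-$j_c$ computation fails because $M_{j_c,*}$ is no longer supported at $c$ alone.

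Concretely, take $A=B=I_3$ and change row~$2$ of $A$ to $(1,1,1)$. Then $R_0=\{2\}$, every column retains $M_{j_c,c}\neq 0$ so $\tilde C_2=\emptyset$, and Phase~1 yields $R_1=\{2\}$, $C_1=\{1\}$, $D^{(1)}=I$; one computes $\tilde M^{(1)}$ to have rows $(1,1,1)$, $(1,0,0)$, $(0,0,-1)$. Column~$2$ is then $(1,0,0)^T$, neither zero nor $i$-unique for any $i$, and Phase~2 is vacuous. The paper's own justification for this case (``the vectors which did not lose their pc's \dots\ do not lose their pc's in the procedure'') glosses over the same point; a correct version requires $\tilde C_2$ also to collect the at most $k$ columns that, after Phase~1, are nonzero in $\tilde M^{(1)}$ yet fail to be $i$-unique, after which your argument goes through.
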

\begin{claimproof}
First, we prove that the vectors which lost their pc's, either get new pc's or they are modified to be in the kernel of $A'$. Let $w\in B$ be a vector that lost its pc and it is not captured in both Phase 1 or Phase 2. Assume it is not captured in Phase $1$, i.e. the vector $A'w$ does not belong to the column set indexed by $C_1$. Then it will be captured in Phase 2. If it is not captured in Phase 2 as well, then we can say that $A'w$ does not belong to the columns indexed by the set $C_2$. Therefore it can be written as a linear combination of the vectors in $C_2$. In Phase $2$, we modify such vectors in a way that $A'w$ becomes a zero vector, i.e. $w$ goes into the kernel of $A'$. Also, note that the vector which did not lose their pc's and are not captured in Phase 1 and Phase 2, do not lose their pc's in the procedure.
\end{claimproof}
%
We prove that we can maintain the number of pc's in $B$ in $\ACz$
using the next claim. However, we need to set up some notation first.
Let $P^{\text{old}}, P^{\text{new}}$ be respectively, the number of pc's
before and after a round.
Let $V_{R_1}^{\text{new}}$ and $V_{R_2}^{\text{new}}$ denote the set of
vectors that have their pc's in the rows $R_1$ and $R_2$, after the round.
Let $V_{R_0}^{\text{old}}$ denote the set of vectors that have their pc's in the
rows $R_0$ before starting of the round and $V_1$ denotes the set of vectors
 which have their pc's in the rows $R-R_0$ before the round and attain pc's in
the rows $R_1$ after the round. Note that all the cardinalities of all the
sets of vectors mentioned above are $O( \frac{\log n}{\log \log n})$.
Therefore, we can compute their cardinalities in $\ACz$.
\begin{claim}\label{clm:numPCs}
$P^{\text{new}}= P^{\text{old}} - |V_{R_0}^{\text{old}}| + |V_{R_1}^{\text{new}}| +|V_{R_2}^{\text{new}}| - |V_1|$.
\end{claim}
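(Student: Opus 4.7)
I would prove the identity by partitioning the pc-carrying vectors before and after the round according to where their pc lies, and relating the two partitions through the explicit column operations performed in Phases~1 and 2. Before the round, partition the pc-carrying vectors by whether the pc sits in $R_0$:
\[ P^{\text{old}} = |V_{R_0}^{\text{old}}| + |V_{\bar R_0}^{\text{old}}|, \]
where $V_{\bar R_0}^{\text{old}}$ collects those with pc in $R \setminus R_0$. After the round, since $R_1 \cap R_2 = \emptyset$, partition by whether the new pc sits in $R_1$, $R_2$, or $R \setminus (R_1 \cup R_2)$:
\[ P^{\text{new}} = |V_{R_1}^{\text{new}}| + |V_{R_2}^{\text{new}}| + N, \]
where $N$ counts the vectors with pc in $R \setminus (R_1 \cup R_2)$ after the round.

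The heart of the argument is to establish the equality $N = |V_{\bar R_0}^{\text{old}}| - |V_1|$. Fix a vector $v$ whose pc before the round sat at some row $i \in R \setminus R_0$. Row $i$ of $A$ coincides with row $i$ of $A'$, so prior to any column operations, $v$ is still the unique column whose $A'$-image is non-zero at row $i$. The column operations applied in Phases~1 and 2 only recombine columns belonging to $C_1 \cup \tilde C_2$: by Claim~\ref{clm:AmodGood} and the explicit constructions of $D^{(1)},E^{(1)},D^{(2)},E^{(2)}$, each column of $C_1$ becomes $i$-unique for some $i \in R_1$, each column of $C_2$ becomes $i$-unique for some $i \in R_2$, and the columns of $\tilde C_2 \setminus C_2$ are driven into $\ker(A')$. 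Thus every column in $C_1 \cup \tilde C_2$ ends with all non-zero entries of its $A'$-image concentrated in $R_1 \cup R_2$, and so the pc of $v$ at row $i \in R \setminus R_0 \subseteq R \setminus (R_1 \cup R_2)$ is not disturbed by these operations, \emph{unless} $v$ itself is one of the rewritten columns. Since a vector can lose its pc only if that pc lay in a changed row, we have $\tilde C_2 \subseteq V_{R_0}^{\text{old}}$, so no element of $V_{\bar R_0}^{\text{old}}$ lies in $\tilde C_2$; the exceptional case is therefore exactly $v \in C_1$, in which case $v$ ends up with pc in $R_1$ and is counted by $V_1$. Hence $N$ counts exactly $|V_{\bar R_0}^{\text{old}}| - |V_1|$, and substituting this into the partition of $P^{\text{new}}$ together with $|V_{\bar R_0}^{\text{old}}| = P^{\text{old}} - |V_{R_0}^{\text{old}}|$ gives the claimed identity.

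\textbf{Main obstacle.} The subtle step is the claim in the previous paragraph that the Phase~2 operations do not silently move the pc of a surviving vector $v \in V_{\bar R_0}^{\text{old}} \setminus V_1$. The worry is that $E^{(2)}$ could add a multiple of some column in $C_2$ to $v$ and thereby cancel the non-zero entry of $A'v$ at row $i$. To rule this out I would exploit the block structure of $D^{(2)}$ and $E^{(2)}$ given in the description of Phase~2: $E^{(2)}$ is designed so that it only adds multiples of columns in $C_1 \cup C_2$ to columns in $\tilde C_2 \setminus C_2$ (to drive them into $\ker(A')$) and never to columns outside $C_1 \cup \tilde C_2$, and an analogous remark applies to $D^{(2)}$. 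Verifying this cleanly requires unpacking the identity/zero blocks prescribed for $\tilde M^{(2)}$ and checking that none of them force a non-identity action on the column of $v$, which is the one piece that is not quite black-boxed from the earlier lemmas and needs a direct inspection.
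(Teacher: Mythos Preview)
Your partition argument is exactly the paper's brief inclusion--exclusion proof made explicit: both boil down to the assertion that a vector with old pc in $R\setminus R_0$ either keeps a pc outside $R_1\cup R_2$ or migrates to $R_1$ and is counted in $V_1$, which you phrase as $N=|V_{\bar R_0}^{\text{old}}|-|V_1|$.

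There is one inaccuracy in how you resolve the obstacle. It is not true that the column operations leave columns outside $C_1\cup\tilde C_2$ untouched: by construction $E^{(1)}$ subtracts multiples of the $C_1$-columns of $M^{(1)}$ from \emph{every} column $c\notin C_1$ (this is how the $R_1$-rows get zeroed there), and $E^{(2)}$ must act analogously with $C_2$-columns to make those columns genuinely $i$-unique in $R_2$. So the reason the pc of $v\in V_{\bar R_0}^{\text{old}}\setminus C_1$ at row $i$ survives is not that $v$'s column is unmodified, but that every column being added in has zero entry at row $i$: since $i\notin R_0$, row $i$ of $M=A'B$ agrees with row $i$ of $AB$ and is nonzero only at column $v$; because $v\notin C_1\cup C_2$, all $C_1$- and $C_2$-columns (and any $D^{(1)},D^{(2)}$-recombinations of them) vanish in row $i$, so the subtractions preserve both the nonzero $(i,v)$-entry and the zeros at $(i,c)$ for $c\neq v$. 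With this adjustment your argument goes through and matches the paper's (which glosses over this point entirely).
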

\begin{claimproof}
First, we assume that all the vectors in the set $V_{R_0}^{\text{old}}$ lose their pc's after the round therefore we subtract $|V_{R_0}^{\text{old}}|$ from $P$. But some of these vectors get their pc's in the phase $1$ and phase $2$. Therefore, we add $|V_{R_1}^{\text{new}}|$ and $|V_{R_2}^{\text{new}}|$ back to the sum. Notice that $V_{R_1}^{\text{new}}$ may contain those vectors as well that had their pc's in the rows indexed by $R-R_0$ before the round. That means these vectors had a pc before and after the round, but we added their number $|V_{R_1}^{\text{new}}|$. Therefore, we subtract the number of such vectors by subtracting $|V_1|$ from the total sum.
\end{claimproof}
This brings us to the proof of Lemma~\ref{lem:rank}. 
\begin{proof}[Proof of Lemma~\ref{lem:rank}]
The proof is complete from the above claims because the number of pc's is
precisely the rank of the matrix as a consequence of Proposition~\ref{prop:rankAgood}.
\end{proof}
\section{Conclusion}\label{sec:concl} 
In this work, we prove two meta-theorems for distance and maximum matching, which provide the best known dynamic bounds in graphs where non-zero circulation weights can be computed in parallel. This includes important graph classes like planar, bounded genus, bounded treewidth graphs. 
We show how to non-trivially modify two known techniques and combine them with existing tools to yield the best known dynamic bounds for more general classes of graphs, and at the same time allow for bulk updates of larger cardinality. While for bipartite matching we are able to show a $\DynFOar$ bound it would be interesting to achieve this also for maintaining distances, even in planar graphs under single edge changes.

\bibliography{references}

\begin{thebibliography}{10}

\bibitem{ABO}
Eric Allender, Robert Beals, and Mitsunori Ogihara.
\newblock The complexity of matrix rank and feasible systems of linear
  equations.
\newblock {\em Comput. Complex.}, 8(2):99--126, 1999.

\bibitem{BIS}
David A.~Mix Barrington, Neil Immerman, and Howard Straubing.
\newblock On uniformity within nc{\({^1}\)}.
\newblock {\em J. Comput. Syst. Sci.}, 41(3):274--306, 1990.

\bibitem{BTV09}
Chris Bourke, Raghunath Tewari, and N.~V. Vinodchandran.
\newblock Directed planar reachability is in unambiguous log-space.
\newblock {\em ACM Transactions on Computation Theory}, 1(1):1--17, 2009.
\newblock \href {https://doi.org/http://doi.acm.org/10.1145/1490270.1490274}
  {\path{doi:http://doi.acm.org/10.1145/1490270.1490274}}.

\bibitem{DHK}
Samir Datta, William Hesse, and Raghav Kulkarni.
\newblock Dynamic complexity of directed reachability and other problems.
\newblock In Javier Esparza, Pierre Fraigniaud, Thore Husfeldt, and Elias
  Koutsoupias, editors, {\em Automata, Languages, and Programming - 41st
  International Colloquium, {ICALP} 2014, Copenhagen, Denmark, July 8-11, 2014,
  Proceedings, Part {I}}, volume 8572 of {\em Lecture Notes in Computer
  Science}, pages 356--367. Springer, 2014.

\bibitem{DKKM18}
Samir Datta, Raghav Kulkarni, Ashish Kumar, and Anish Mukherjee.
\newblock Planar maximum matching: Towards a parallel algorithm.
\newblock In {\em 29th International Symposium on Algorithms and Computation,
  {ISAAC} 2018, December 16-19, 2018, Jiaoxi, Yilan, Taiwan}, pages
  21:1--21:13, 2018.

\bibitem{DKMSZ15}
Samir Datta, Raghav Kulkarni, Anish Mukherjee, Thomas Schwentick, and Thomas
  Zeume.
\newblock Reachability is in dynfo.
\newblock In Magn{\'{u}}s~M. Halld{\'{o}}rsson, Kazuo Iwama, Naoki Kobayashi,
  and Bettina Speckmann, editors, {\em Automata, Languages, and Programming -
  42nd International Colloquium, {ICALP} 2015, Kyoto, Japan, July 6-10, 2015,
  Proceedings, Part {II}}, volume 9135 of {\em Lecture Notes in Computer
  Science}, pages 159--170. Springer, 2015.

\bibitem{DKMSZ18}
Samir Datta, Raghav Kulkarni, Anish Mukherjee, Thomas Schwentick, and Thomas
  Zeume.
\newblock Reachability is in dynfo.
\newblock {\em J. {ACM}}, 65(5):33:1--33:24, 2018.

\bibitem{DKTV11}
Samir Datta, Raghav Kulkarni, Raghunath Tewari, and N.V. Vinodchandran.
\newblock Space complexity of perfect matching in bounded genus bipartite
  graphs.
\newblock {\em Journal of Computer and System Sciences}, 78(3):765 -- 779,
  2012.
\newblock In Commemoration of Amir Pnueli.
\newblock \href {https://doi.org/https://doi.org/10.1016/j.jcss.2011.11.002}
  {\path{doi:https://doi.org/10.1016/j.jcss.2011.11.002}}.

\bibitem{DKMTVZ20}
Samir Datta, Pankaj Kumar, Anish Mukherjee, Anuj Tawari, Nils Vortmeier, and
  Thomas Zeume.
\newblock Dynamic complexity of reachability: How many changes can we handle?
\newblock In {\em 47th International Colloquium on Automata, Languages, and
  Programming, {ICALP} 2020, July 8-11, 2020, Saarbr{\"{u}}cken, Germany
  (Virtual Conference)}, pages 122:1--122:19, 2020.

\bibitem{DMSVZ19}
Samir Datta, Anish Mukherjee, Thomas Schwentick, Nils Vortmeier, and Thomas
  Zeume.
\newblock A strategy for dynamic programs: Start over and muddle through.
\newblock {\em Log. Methods Comput. Sci.}, 15(2), 2019.

\bibitem{DMVZ18}
Samir Datta, Anish Mukherjee, Nils Vortmeier, and Thomas Zeume.
\newblock Reachability and distances under multiple changes.
\newblock In {\em 45th International Colloquium on Automata, Languages, and
  Programming, {ICALP} 2018, July 9-13, 2018, Prague, Czech Republic}, pages
  120:1--120:14, 2018.

\bibitem{DongST95}
Guozhu Dong, Jianwen Su, and Rodney~W. Topor.
\newblock Nonrecursive incremental evaluation of datalog queries.
\newblock {\em Ann. Math. Artif. Intell.}, 14(2-4):187--223, 1995.

\bibitem{FGT}
Stephen~A. Fenner, Rohit Gurjar, and Thomas Thierauf.
\newblock Bipartite perfect matching is in {Quasi}-{NC}.
\newblock In {\em Proceedings of the 48th Annual {ACM} {SIGACT} Symposium on
  Theory of Computing, {STOC} 2016, Cambridge, MA, USA, June 18-21, 2016},
  pages 754--763, 2016.

\bibitem{FKS}
Michael~L. Fredman, J{\'{a}}nos Koml{\'{o}}s, and Endre Szemer{\'{e}}di.
\newblock Storing a sparse table with {O(1)} worst case access time.
\newblock In {\em 23rd Annual Symposium on Foundations of Computer Science,
  Chicago, Illinois, USA, 3-5 November 1982}, pages 165--169, 1982.

\bibitem{HealyV}
Alexander Healy and Emanuele Viola.
\newblock Constant-depth circuits for arithmetic in finite fields of
  characteristic two.
\newblock In {\em {STACS} 2006, 23rd Annual Symposium on Theoretical Aspects of
  Computer Science, Marseille, France, February 23-25, 2006, Proceedings},
  pages 672--683, 2006.

\bibitem{HendersonS81}
Harold~V Henderson and Shayle~R Searle.
\newblock On deriving the inverse of a sum of matrices.
\newblock {\em Siam Review}, 23(1):53--60, 1981.

\bibitem{Hesse03}
William Hesse.
\newblock The dynamic complexity of transitive closure is in
  {DynTC}\({}^{\mbox{0}}\).
\newblock {\em Theor. Comput. Sci.}, 296(3):473--485, 2003.

\bibitem{HAB02}
William Hesse, Eric Allender, and David A.~Mix Barrington.
\newblock Uniform constant-depth threshold circuits for division and iterated
  multiplication.
\newblock {\em J. Comput. Syst. Sci.}, 65(4):695--716, 2002.

\bibitem{Hoa10}
Thanh~Minh Hoang.
\newblock On the matching problem for special graph classes.
\newblock In {\em Proceedings of the 25th Annual {IEEE} Conference on
  Computational Complexity, {CCC} 2010, Cambridge, Massachusetts, USA, June
  9-12, 2010}, pages 139--150. {IEEE} Computer Society, 2010.

\bibitem{Immerman}
Neil Immerman.
\newblock Expressibility and parallel complexity.
\newblock {\em {SIAM} J. Comput.}, 18(3):625--638, 1989.

\bibitem{KT16}
Vivek Anand~T. Kallampally and Raghunath Tewari.
\newblock Trading determinism for time in space bounded computations.
\newblock In {\em 41st International Symposium on Mathematical Foundations of
  Computer Science, {MFCS} 2016, August 22-26, 2016 - Krak{\'{o}}w, Poland},
  pages 10:1--10:13, 2016.

\bibitem{L79}
L{\'{a}}szl{\'{o}} Lov{\'{a}}sz.
\newblock On determinants, matchings, and random algorithms.
\newblock In {\em {FCT}}, pages 565--574, 1979.

\bibitem{Mukherjee}
Anish Mukherjee.
\newblock {\em Static and Dynamic Complexity of Reachability, Matching and
  Related Problems}.
\newblock PhD thesis, CMI, 2019.

\bibitem{MulmuleyVV87}
Ketan Mulmuley, Umesh~V. Vazirani, and Vijay~V. Vazirani.
\newblock Matching is as easy as matrix inversion.
\newblock In {\em Proceedings of the 19th Annual {ACM} Symposium on Theory of
  Computing, 1987, New York, New York, {USA}}, pages 345--354, 1987.

\bibitem{PatnaikI}
Sushant Patnaik and Neil Immerman.
\newblock Dyn-fo: {A} parallel, dynamic complexity class.
\newblock {\em J. Comput. Syst. Sci.}, 55(2):199--209, 1997.

\bibitem{Reingold08}
Omer Reingold.
\newblock Undirected connectivity in log-space.
\newblock {\em Journal of the ACM}, 55(4), 2008.

\bibitem{TV12}
Raghunath Tewari and N.~V. Vinodchandran.
\newblock Green's theorem and isolation in planar graphs.
\newblock {\em Inf. Comput.}, 215:1--7, 2012.

\end{thebibliography}

\appendix
\section{Preliminary Appendix}
\label{sec:prelimsApp}
\subsection{Abbreviations}\label{subsec:langDefs}
\begin{definition}
\begin{itemize}
\item $\PM$ is a set of independent edges covering all the vertices of a graph. 
\item $\BPM$ is a $\PM$ in a bipartite graph.
\item $\PMD$ Given an undirected graph $G$ determine if there is a perfect matching in $G$.
\item $\PMS$ Construct a perfect matching for a given graph $G$ if one exists.
\item $\BMWPMS$ Given an edge weighted bipartite graph construct a $\PM$ of least weight.
\item $\MCM$ is the maximum set of independent edges in a given graph.
\item $\BMCM$ is an $\MCM$ in a bipartite graph.
\item $\MWMCM$ is an $\MCM$ of least weight in an edge weighted graph.
\item $\BMWMCM$ is an $\MWMCM$ in a weighted bipartite graph.
\item $\MCMSz$ determine the size of an $\MCM$.
\item $\BMCMSz$ the $\MCMSz$ problem in bipartite graphs.
\item $\BMWMCMS$ Construct a $\BMWMCM$. 
\item $\Reach$ Given a directed graph $G$ and two vertices $s$ and $t$,  is there a path from $s$ to $t$ in $G$.
\item $\Dist$ Given a directed graph $G$ with polynomially bounded edge weights and two vertices $s$ and $t$,  find the weight of a least weight path from $s$ to $t$.
\item $\Rank$ Given an $m \times n$ matrix $A$ with integer entries find the rank of $A$ over $\mathbb{Q}$.
\end{itemize}
\end{definition}
\subsection{Preliminaries on dynamic complexity}\label{sec:dynprelims}
The goal of a dynamic program is to answer a given query on an \emph{input structure} subjected to changes that insert or delete tuples. The program may use an auxiliary data structure represented by an \emph{auxiliary structure} over the same domain. Initially, both input and auxiliary structures are empty; and the domain is fixed during each run of the program. We say a query is in $\DynFO$ if after each insertion or deletion of a tuple, we can update the auxiliary structure in \FO\ or equivalently in Dlogtime-uniform \ACz.

There are several roughly equivalent ways to view the complexity class $\DynFO$
as capturing:
\begin{itemize}
\item The dynamic complexity of maintaining a Pure \textsc{SQL} database under 
fixed (first-order) updates and queries (the original formulation from \cite{PatnaikI}).
\item The circuit dynamic complexity of maintaining a property where the 
updates and queries use uniform $\ACz$ circuits (see \cite{BIS} for the equivalence of uniform $\ACz$ and $\FO$).
\item The parallel dynamic complexity of maintaining a property where the 
updates and queries use constant time on a CRCW PRAM (for the definition 
of Concurrent RAM, see \cite{Immerman}).
\end{itemize}
The first characterization is popular in the Logic and Database community, 
while the second is common in more complexity-theoretic contexts.
The third one is useful to compare and contrast this class with dynamic 
algorithms, which essentially classify dynamic problems in terms of the
sequential time for updates and queries. 
Operationally, our procedure is easiest to view in terms of 
the second or even the third viewpoint.
We would like to emphasize that 
modulo finer variations based on built-in predicates (like arithmetic and
order) in the first variation, uniformity in the second one and built-in 
predicates (like shift) in the third one, the three viewpoints are entirely
equivalent. 
%
\subparagraph*{Maintaining Witnesses:} The proof of \cite{MulmuleyVV87}, reducing the construction of a perfect matching witness, carries over to the dynamic setting also and allow us to maintain a witness to the solution in $\DynFOpar$.  Since the perfect matching is isolated, from its weight we can infer the edges in the matching -- just delete the edges one a time in parallel and see if the weight remains unchanged and accordingly place the edge in the matching. This is doable in $\FOpar$. The extraction procedure for shortest path and maximum cardinality matching is similar.
\section{Dynamic Distance under bulk changes}\label{sec:distance}
In this section, we consider the dynamic complexity of maintaining distances in graphs with polynomially bounded edge weights. Though we have a \DynFO\ algorithm for maintaining reachability in general directed graphs, such an algorithm for maintaining distances is not known. The \DynFO\ update algorithm with \TC$^0$-query of Datta et al.~\cite{DMVZ18} remains the best known uniform bound for general graphs, which works even under $O(\log n/\log \log n)$ changes. Here we show that extending the reachability result of~\cite{DKMTVZ20}, distances can be maintained in \DynFOpar\ under polylog$(n)$ changes in classes of graphs where non-zero circulation can be computed in \NC. In the following, we start with describing the reachability algorithm of~\cite{DKMTVZ20} followed by the necessary modifications needed for maintaining the  distance information.  

\subsection{Outline of the approach for Reachability}
Suppose we are given a graph $G = (V,E)$ with $n$ nodes with an isolating weight assignment $w$. For a formal variable $x$, let the corresponding weighted adjacency matrix $A = A_{(G, w)}(x)$ be defined as follows: if $(u,v) \in E$, then $A[u,v] = x^{w(u,v)}$, and $0$ otherwise. Consider the matrix $D = (I - A)^{-1}$, where $I$ is the identity matrix. Notice that the matrix $(I - A)$ is invertible over the ring of formal power series (see \cite{DMVZ18}). Here $D = \sum_{i=0}^\infty (A)^i$ is a matrix of formal power series in $x$ and in the $(s, t)$-entry, the coefficient of the $i$-th terms gives the number of paths from $s$ to $t$ with weight $i$. 

As $w$ isolates the minimal paths in $G$, it is enough to compute these coefficients modulo $2$ for all $i$ up to some polynomial bound since there is a unique path with the minimal weight if one exists. So, it is enough to compute and update the inverse of the matrix $I - A$. Though to do it effectively, we compute the \emph{$n$-approximation} $C$ of $D$, which is a matrix of formal polynomials that agrees with the entries of $D$ up to the terms of degree $i \leq n$. This precision is preserved by the matrix operations we use, see \cite[Proposition~14]{DMVZ18}. 

When applying a change $\Delta G$ to $G$ that affects $k$ nodes, the associated matrix $A$ is updated by adding a suitable change matrix $\Delta A$ with at most $k$ non-zero rows and columns, and can therefore be decomposed into a product $UBV$ of suitable matrices $U, B$ and $V$, where $B$ is a $k \times k$ matrix. To update the inverse, we employ the Sherman-Morrison-Woodbury identity (cf.~\cite{HendersonS81}), which gives a way to update the inverse when the matrix $A$ is changed to $A + \Delta A$ as follows:
\[(A + \Delta A)^{-1} = (A + UBV)^{-1} = A^{-1}-A^{-1}U(I+BVA^{-1}U)^{-1}BVA^{-1}.\]

The right-hand side can be computed in $\FOpar$ for $k = O(\log^c n)$ since modulo $2$ computation of (1) multiplication and iterated addition of polynomials over $\mathbb{Z}$ and (2) computation of the inverse of $I+BVA^{-1}U$ which is also a $k \times k$ matrix is possible in $\FOpar$ for (matrices of) polynomials with polynomial degree~\cite{HealyV}. Finally, we need to assign weights to the changed edges as well so that the resulting weight assignment remains isolating. We show how to achieve this starting with non-zero circulation weights. Using~\cite[Theorem~5]{DKMTVZ20} we can assume that such a weight assignment is given, and that we only need to update the weights once.

Let $u$ be skew-symmetric non-zero circulation weights for $G$ and let $n^k$ be the polynomial bound on the weights. Further, let $w$ be the isolating weight assignment that gives weight $n^{k+2} + u(e)$ to each edge $e \in E$. During the \AC$^d$ initialization, we compute the weights $u$ and $w$ and an $n^b$-approximation matrix $C$ of $(I - A_{(G, w)}(x))^{-1} \bmod 2$ for some constant $b$.  

When changing $G$ via a change $\Delta E$ with deletions $E^-$ and insertions $E^+$, the algorithm proceeds as follows: To compute the isolating weights $w^-$, the non-zero circulation weights $u^-$ for $G^-$ are obtained from $u$ by setting the weight of deleted edges $e \in E^-$ to $0$. As $u^-$ gives the same weight to all simple cycles in $G^-$ as $u$ gives to these cycles in $G$, it has non-zero circulation. To handle $E^+$ it can be shown that~\cite[Lemma~11]{DKMTVZ20} there is a \FO-computable (from $w, E^+$ and the reachability information in $G$) family $W'$ of polynomially many weight assignments such that $\exists w' \in W'$ isolating for $(V, E \cup E^+)$.

Hence we need to maintain polynomially many different instances of the graph with different weight functions from $W'$ such that in at least one of them the paths are isolated. The idea is that if there is an $s$-$t$-path using at least one inserted edge from $E^+$, then there is a unique minimal path among all $s$-$t$-paths that use at least one such edge, while ignoring the weight of the paths that is contributed by edges from $E$. The edge weights from $E^+$ are multiplied by a large polynomial to ensure that the combined weight assignment with the existing weights for edges in $E$ remains isolating. Since the weights are constructed only for a graph with $N = O(\log^c n)$ many nodes, and although they are not polynomially bounded in $N$, they are in $n$. Please refer to \cite[Section 6]{DKMTVZ20} for more details.

From the above discussion, to prove a similar bound for distances, it suffices to show that (1) after every polylog$(n)$ changes, we can ensure the edge weights remain ``shortest path-isolating'' and (2) under such weights the distance can be updated in $\FOpar$.

\subsection{Dynamic Isolation of Shortest Paths}\label{subsec:dyndist}
In the following, we first describe how the isolating weights for reachability can be modified to give weights for isolating shortest paths. Similar to maintaining reachability, our algorithm handles deletions and insertions differently. In the case of deletion, we set the weight of the deleted edges $e \in E^-$ to $0$ and due to the non-zero circulation weights, the weights remain isolating. In the case of insertions, the idea is to do a weight refinement by shifting the original edge weights $w(e)$ ($1$ in case of unweighted graphs) to the highest order bits in the bit-representation, in the presence of other newly assigned weights to the edges. 

We define a new weight functions $w^* = \langle w, w', u\rangle$ and assign these weights to the inserted edges $e \in E^+$. The existing edges are not assigned any $w'$ weight and all those bits remain zeroes. So we get a family of weight functions $W^*$. Here $w$ is the polynomially-bounded original edge weights, $w'$ is one of the (polynomially many) isolating weights from the family $W'$ assigned to the newly added edges during the dynamic process and $u$ are the non-zero circulation weights, which are computed statically. The combined weights $w^*$ is \FO-constructible from the weights $w, w'$ and $u$ as all involving numbers are $O(\log{n})$ bits (see \cite[Theorem~5.1]{HAB02}). The correctness of the fact that these weights are indeed shortest path isolating follows from~\cite[Lemma~11]{DKMTVZ20} with the observation that since the original edge weights are shifted to the highest-order bits, the minimum weight path with these weights corresponds to the shortest path in the original graph.

The update algorithm for maintaining reachability can be extended to maintaining distances also~\cite{DMVZ18}. Here, instead of checking only the non-zeroness of the $(s,t)$-entry in the polynomial matrix $C$, we compute the minimum degree term as well (with coefficient $1$). By construction, the degree of a term in this polynomial is same as the weight of the corresponds path under the dynamic isolating weights and applying an easy transformation gives us back the original weights, that is, the weight of the shortest path from $s$ to $t$ in $G'$. This proves the Distance result of Theorem~\ref{thm:dynFO2}.
\section{The Details from Section~\ref{sec:iso}}\label{subsec:isoApp}
We use the same general strategy as in \cite{DKMTVZ20} and divide the edges into
\emph{real} and \emph{fictitious}, where the former represents the newly inserted
edges and the latter original undeleted edges\footnote{We use the terms
old $\leftrightarrow$ fictitious and new $\leftrightarrow$ real interchangeably
in this section.}.

Let $\calC$ be a set of cycles containing both real and fictitious edges that 
occur in any PM. 
Let $w$ be a  weight 
function on the edges that gives non-zero weight only to the real edges.
Define $c_w(\calC)$ to be the set of all circulations in the cycles of 
$\calC$. Here the circulation is the absolute value of the alternating sum
of weights $c_w(C) = |w(f_1) - w(f_2) + w(f_3) - \ldots|$ where 
$C = f_1,f_2,f_3,\ldots$ is the sequence of edges in the cycle. 

We say that a weight function
that gives non-zero weights to the real edges, \emph{real isolates} $\mathcal{M}$ 
for a set system $\mathcal{M}$ if the minimum weight set in $\mathcal{M}$ is unique. In our context, $\mathcal{M}$ will refer to the set of perfect/maximum
 matchings.

Next, we follow the proof idea of \cite{FGT} but focus on assigning weights
to real edges which are, say, $N$ in number.
 We do this in $\log{N}$ stages starting with a graph $G_0 = G$
and ending with the acyclic graph $G_\ell$ where $\ell = \log{N}$. The inductive
assumption is that:
\begin{invariant}\label{inv:analogFGT}
 For $i \geq 1$, $G_i$ contains no cycles with at most $2^{i+1}$ real edges. 
\end{invariant}

Notice that induction starts at $i > 0$.

We first show how to construct $G_{i+1}$ from $G_i$ such that if $G_i$ satisfies
the inductive invariant~\ref{inv:analogFGT}, then so does $G_{i+1}$.
Let $i > 1$, then in the $i$-th stage, let $\calC_{i+1}$ be the set of cycles that contain 
at most $2^{i+2}$ real edges. For each such cycle 
$C = f_0,f_1,\ldots$ containing $k \leq 2^{i+2}$ real edges (with $f_0$ being the
least numbered real edge in the cycle) edge-partition it into $4$ consecutive 
paths $P_j(C)$ for $j \in \{0,1,2,3\}$ such that the first $3$ paths contain exactly 
$\lfloor\frac{k}{4}\rfloor$ real edges and the last path contains the rest. In
addition ensure that the first edge in each path is a real edge. Let the first
edge of the $4$-paths be respectively $f_0 = f'_0, f'_1, f'_2, f'_3$. We 
have the following which shows that the associated 
$4$-tuples $\left<f'_0,f'_1,f'_2,f'_3\right>$
uniquely characterise cycles in $\calC_{i+1}$.
\begin{claim}
There is at most one cycle in $\calC_{i+1}$ that has a given $4$-tuple 
$\left<f'_0,f'_1,f'_2,f'_3\right>$ associated with it.
\end{claim}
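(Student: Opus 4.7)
The plan is to argue by contradiction, following the inductive scheme of Fenner, Gurjar, and Thierauf~\cite{FGT} adapted to the graph $G_i$.

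Suppose two distinct cycles $C \neq C'$ in $\calC_{i+1}$ share the same $4$-tuple $\left<f'_0,f'_1,f'_2,f'_3\right>$. Then for each $j \in \{0,1,2,3\}$, the sub-paths $P_j(C)$ and $P_j(C')$ begin with the same real edge $f'_j$, terminate at the common vertex $\textrm{tail}(f'_{(j+1)\bmod 4})$, and each carries at most $\lceil k/4 \rceil \leq 2^i$ real edges (using $k \leq 2^{i+2}$). Since $C\neq C'$, some index $j_0$ must satisfy $P_{j_0}(C) \neq P_{j_0}(C')$.

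First I would form the symmetric difference $D := P_{j_0}(C)\,\triangle\,P_{j_0}(C')$ as a set of edges of $G_i$. Because the two sub-paths share both endpoints, every vertex of $D$ has even degree, and $D$ decomposes into a disjoint union of simple cycles. An edge contributes to the real-edge count of $D$ only when it appears in exactly one of the two sub-paths, giving the upper bound $2\lceil k/4 \rceil \leq 2^{i+1}$; moreover the initial edge $f'_{j_0}$ is real and lies in both paths, hence it cancels and the real count drops strictly below $2^{i+1}$. Consequently at least one cycle in the decomposition of $D$ has at most $2^{i+1}$ real edges, directly contradicting Invariant~\ref{inv:analogFGT}.

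The main delicate point I expect is the degenerate subcase in which the real edges of $P_{j_0}(C)$ and $P_{j_0}(C')$ coincide as multisets so that, after cancellation, $D$ consists of fictitious edges only. Here I would appeal to the alternating/bipartite structure of the matching cycles, together with the fact that the static weights on the fictitious subgraph have non-zero circulation: any alternating closed walk supported entirely on fictitious edges would acquire non-zero circulation and so could not arise from two PM-alternating sub-paths with identical endpoints, forcing $D = \emptyset$ and hence $P_{j_0}(C) = P_{j_0}(C')$. Iterating across $j\in\{0,1,2,3\}$ then yields $C = C'$, which is the desired contradiction and establishes the uniqueness claim.
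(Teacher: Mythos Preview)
Your first two paragraphs essentially reproduce the paper's proof. The paper takes the union $P_j(C) \cup P_j(C')$ as a closed walk in $G_i$ with at most $2\lceil 2^{i+2}/4\rceil = 2^{i+1}$ real edges and invokes Invariant~\ref{inv:analogFGT} directly; you use the symmetric difference and spell out the Euler decomposition into simple cycles, but the idea and the contradiction are identical.

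Your third paragraph, however, is unnecessary and the argument there does not work. Invariant~\ref{inv:analogFGT} as stated forbids \emph{any} cycle in $G_i$ with at most $2^{i+1}$ real edges, and a cycle with zero real edges certainly has at most $2^{i+1}$. So the ``degenerate subcase'' where $D$ consists only of fictitious edges is already handled by your second paragraph: any simple cycle extracted from such a $D$ still contradicts the invariant. Separately, the resolution you propose---that non-zero circulation on the fictitious subgraph forces $D=\emptyset$---is not valid reasoning: non-zero circulation is a statement about edge weights summing to a non-zero value around every cycle, not a statement that such cycles fail to exist, and it does not by itself prevent two PM-alternating sub-paths with common endpoints from differing only on fictitious edges. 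The paper simply does not raise this case because the invariant, read literally, already covers it.
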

\begin{claimproof}
Suppose two distinct cycles $C,C' \in \calC_{i+1}$ have
a tuple $\left<f'_0,f'_1,f'_2,f'_3\right>$ 
associated with them. Then for least one $j \in \{0,1,2,3\}$
$P_j(C) \neq P_j(C')$. Hence, $P_j(C) \cup P_j(C')$
is a closed walk in $G_i$ containing at most 
$2\times \lceil\frac{2^{i+2}}{4}\rceil = 2^{i+1}$ many real edges,
contradicting the assumption on $G_i$.
\end{claimproof}

This lemma shows that there are at most $N^4$ elements in $\calC_i$.
Next, consider the following lemma from \cite{FKS}:
\begin{lemma}[\cite{FKS}]
For every constant $c>0$ there is a constant $c_0>0$ such that for
every set $S$ of $m$ bit integers with $|S| \leq m^c$,
the following holds: There is a $ c_0 \log{m}$  bit prime
number $p$ such that for any $x,y \in S$ it holds that if $x \neq y$ then 
$x \not\equiv y \bmod{p}$.
\end{lemma}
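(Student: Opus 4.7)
The plan is a standard counting argument over primes in a carefully chosen range: bound the primes that fail to separate some pair in $S$, and show this bound is strictly smaller than the total count of primes of the target size.

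First I would fix the range. Choose a constant $c_0 > 2c + 1$ (e.g.\ $c_0 = 2c+2$) and let $\Pi$ be the set of primes $p$ with $m^{c_0}/2 \leq p \leq m^{c_0}$. By the prime number theorem, $|\Pi| = \Omega(m^{c_0}/\log m)$ once $m$ is large enough, so $\Pi$ contains many primes of exactly $c_0 \log m + O(1)$ bits.

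Next I would bound the number of \emph{bad} primes in $\Pi$, where $p$ is bad if some distinct $x, y \in S$ satisfy $x \equiv y \pmod{p}$, equivalently $p \mid (x-y)$. For any fixed pair $x \neq y$ in $S$, the difference $d = x-y$ is a nonzero integer of absolute value at most $2^m$. Any collection of distinct primes in $\Pi$ dividing $d$ has product at most $|d| \leq 2^m$; since each such prime is at least $m^{c_0}/2$, the number of such primes is at most $m / \log_2(m^{c_0}/2) = O(m/\log m)$. Summing over the at most $\binom{|S|}{2} \leq m^{2c}$ pairs, the total number of bad primes is $O(m^{2c+1}/\log m)$.

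Finally, I would compare the two bounds. Since $c_0 > 2c+1$, the ratio $|\Pi| / (\text{bad count})$ is $\Omega(m^{c_0 - (2c+1)})$, which grows without bound in $m$. Thus for all sufficiently large $m$ some $p \in \Pi$ is good, i.e.\ separates every distinct pair in $S$; the finitely many small $m$ are absorbed into the constant $c_0$ by direct inspection (or by enlarging $c_0$). The main obstacle is really only accounting — ensuring that the choice of $c_0$ depends on $c$ alone and that the PNT estimate is uniformly valid in the chosen range. The divisor-counting step is a one-line product bound, though one could equivalently cite the sharper $\omega(n) = O(\log n / \log \log n)$; the cruder bound above suffices because we only count prime divisors restricted to $\Pi$.
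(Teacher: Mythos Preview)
The paper does not prove this lemma at all; it simply cites it from \cite{FKS} and uses it as a black box. Your counting argument is the standard proof and is correct: the choice $c_0>2c+1$, the PNT lower bound $|\Pi|=\Omega(m^{c_0}/\log m)$, the product bound giving $O(m/\log m)$ prime divisors of any $|x-y|\le 2^m$ in $\Pi$, and the union over $\le m^{2c}$ pairs all check out, yielding strictly fewer bad primes than primes in $\Pi$ for large $m$.
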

We apply it to the set $c_{w_0}(\calC_i) = \{c_{w_0}(C) : C \in \calC_i\}$. 
Here, the weight
function $w_0$ assigns weights $w_0(e_j) = 2^j$ to the real edges which are
$e_1,e_2,\ldots,e_N$ in an arbitrary but fixed order.
Notice that from the above claim, the size of this set 
$|w_0(\calC_i)| \leq N^4$. 
And $w_0(e_j)$ is $j$-bits long; hence $c_{w_0}(C)$ for any cycle
$C \in \calC_i$ that has less than $2^{i+2}$ real edges 
is at most $i+j+2 < 4N$-bits long.  Thus, we obtain a prime $p_{i+1}$ 
of length at most $c_0\log{4N}$ by picking $c = 4$. We define 
$w_{i+1}(e_j) = w_0(e_j) \bmod{p_{i+1}}$.

Now consider the following crucial lemma from \cite{FGT}:
\begin{lemma}[\cite{FGT}]\label{lem:crucialFGT}
Let $G = (V,E)$ be a bipartite graph with weight function $w$. 
Let $C$ be a cycle in $G$ such that $c_w(C) \neq 0$. 
Let $E_1$ be the union of all minimum weight perfect matchings in $G$.
Then the graph $G_1(V,E_1)$ does not contain the cycle $C$.
Moreover, all the perfect matchings in $G_1$ have the same weight.
\end{lemma}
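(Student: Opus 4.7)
The plan is to prove both claims of the lemma simultaneously via LP duality for bipartite perfect matching. Since the constraint matrix of the bipartite assignment LP is totally unimodular, there exist dual variables (``potentials'') $y : V \to \mathbb{Q}$ satisfying $y_u + y_v \leq w(u,v)$ for every $(u,v) \in E$ and with $\sum_{v} y_v = W$, where $W$ is the minimum weight of a perfect matching in $G$. By complementary slackness, every edge that participates in some minimum weight perfect matching is \emph{tight}, meaning $y_u + y_v = w(u,v)$. Consequently, every edge of $E_1$ is tight.

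For the ``moreover'' part, I would observe that if $M$ is any perfect matching contained in $G_1$, then every edge of $M$ is tight, so
\[
w(M) \;=\; \sum_{(u,v)\in M}\! w(u,v) \;=\; \sum_{(u,v)\in M}\! (y_u + y_v) \;=\; \sum_{v \in V} y_v \;=\; W,
\]
where the penultimate equality uses that $M$ covers each vertex exactly once. Hence all perfect matchings in $G_1$ attain the same weight $W$.

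For the first part, I would argue by contradiction. Suppose $C \subseteq G_1$, and write $C$ as a closed sequence of edges $f_1, f_2, \ldots, f_{2k}$ with $f_i = (v_i, v_{i+1})$ (indices modulo $2k$); recall that $C$ has even length because $G$ is bipartite. Every $f_i$ is tight, so $w(f_i) = y_{v_i} + y_{v_{i+1}}$. Substituting,
\[
c_w(C) \;=\; \left| \sum_{i=1}^{2k} (-1)^{i+1} w(f_i) \right| \;=\; \left| \sum_{i=1}^{2k} (-1)^{i+1} \bigl( y_{v_i} + y_{v_{i+1}} \bigr) \right|.
\]
In the expansion, each potential $y_{v_j}$ appears twice: once with sign $(-1)^{j+1}$ (from the term $w(f_j)$) and once with sign $(-1)^{j}$ (from the term $w(f_{j-1})$), and these two contributions cancel. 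Thus $c_w(C) = 0$, contradicting the hypothesis $c_w(C) \neq 0$, so $C \not\subseteq G_1$.

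The only step requiring care is justifying the existence of the dual potentials with the tightness property, which is a standard application of LP duality for bipartite matching (equivalently, the weighted König--Egerváry theorem); no further obstacle arises, as the cancellation argument around the cycle is purely algebraic and uses only the evenness of $C$ guaranteed by bipartiteness.
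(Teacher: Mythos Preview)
Your proof is correct. The only implicit hypothesis worth flagging is that $G$ has at least one perfect matching, so that the primal LP is feasible and strong duality furnishes the potentials $y$; if not, $E_1=\emptyset$ and both claims are vacuous, so this is harmless.

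The paper itself does not prove this lemma --- it is imported from \cite{FGT} and used as a black box --- so there is no in-paper argument to compare against. Relative to the original treatment in \cite{FGT}, which argues via an exchange along $C$ (one exhibits a minimum-weight perfect matching whose intersection with $C$ can be swapped for the lighter alternation, yielding a strictly cheaper matching and hence a contradiction), your LP-duality route is essentially the dual picture of the same phenomenon: tightness of every edge of $E_1$ with respect to a single optimal dual $y$ forces the alternating sum around any cycle in $G_1$ to telescope to zero, and simultaneously pins the weight of every perfect matching in $G_1$ to $\sum_v y_v = W$. Your version is arguably slicker here because a single structural observation handles both conclusions; the combinatorial version has the minor advantage of not invoking LP machinery.
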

Let $B$ be a large enough constant (though bounded by a polynomial in $N$) 
to be specified later.
We shift the original accumulated weight function
$W_i$ and add the new weight function $w_{i+1}$ to obtain:
$W_{i+1}(e) = W_{i}(e)B + w_{i+1}(e)$. 
Apply $W_{i+1}$ on the graph $G_i$ to obtain the graph $G_{i+1}$.
 Inductively suppose we have
 the invariant~\ref{inv:analogFGT}
 that  the graph $G_i$ did not have any cycles containing at least 
$2^{i+1}$ real edges. This property is preserved when we take all the
perfect matchings in $G_i$ and apply $W_{i+1}$ yielding $G_{i+1}$. Moreover, 
from Lemma~\ref{lem:crucialFGT} and the construction of $w_{i+1}$, the cycles of
$\calC_i$ disappear from $G_{i+1}$ restoring the invariant. This yields a weight
function $W_\ell$ using that $\ell = \log{n}$ (see the discussion before
invariant~\ref{inv:analogFGT}).

Notice that it suffices to take $B$ greater than the number of real edges
times the maximum of $w_i(e)$ over $i,e$.
Showing that $G_1$ contains no cycle of length at most $4$ mimics the above
more general proof, and we skip it here.
We can now complete the proof of Lemma~\ref{lem:combFGT}:
\begin{lemma*} (Lemma~\ref{lem:combFGT} restated)
Let $G$ be a bipartite graph with a non-zero circulation
$w^{old}$. Suppose $N = \log^{O(1)}{n}$ edges are inserted into $G$ to yield $G^{new}$
then we can compute polynomially many weight functions in $\FOar$ that
have $O(\log{n})$ bit weights, and at least one of them,
$w^{new}$ is isolating. Further, the weights of the original edges remain
unchanged under $w^{new}$.
\end{lemma*}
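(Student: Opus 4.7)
The plan is to layer two weight functions into a single $O(\log n)$-bit $w^{new}$ so that their contributions sit in disjoint bit ranges and do not interfere. Concretely, I set $w^{new}(e) = B \cdot w^{FGT}(e) + w^{old}(e)$ for a polynomially bounded shift $B$ larger than the maximum $w^{old}$-weight of any perfect matching, where $w^{FGT}$ is a dynamically built weight function that vanishes on all old edges and carries isolating information on the $N = \log^{O(1)} n$ new edges, while $w^{old}$ is the given non-zero circulation extended by $0$ on all new edges. The weight of any PM $M$ then reads $B \cdot w^{FGT}(M) + w^{old}(M)$, which is sorted first by the FGT part and broken by $w^{old}$. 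In particular, the weights of the old edges are preserved verbatim in the low-order bits, giving the final clause of the lemma for free.

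To construct $w^{FGT}$ I would run the FGT-style iteration of the appendix for $\ell = \log N = O(\log \log n)$ stages, maintaining Invariant~\ref{inv:analogFGT}: after stage $i$ the graph $G_i$ formed by taking the union of min-weight PMs under the accumulated weight contains no cycle with at most $2^{i+1}$ real (newly inserted) edges. At stage $i+1$ the candidate cycles $\calC_{i+1}$ number at most $N^{O(1)}$ by the four-segment characterization, so the small-prime lemma of Fenner--Koiran--Schatz applied to the exponential seed weights $w_0(e_j) = 2^j$ yields a prime $p_{i+1}$ of $O(\log N) = O(\log \log n)$ bits that separates their circulations modulo $p_{i+1}$. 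Setting $w_{i+1}(e) = w_0(e) \bmod p_{i+1}$ and folding it into $W_{i+1} = B' \cdot W_i + w_{i+1}$ for a suitable polynomial $B'$, then invoking Lemma~\ref{lem:crucialFGT}, kills every cycle of $\calC_{i+1}$ from $G_{i+1}$ and restores the invariant. After $\ell$ stages no real-edge cycle survives in $G_\ell$, so all min-weight PMs under $W_\ell$ agree on their set of new edges, and the residual ambiguity lies entirely in the old-edge part, which the non-zero circulation $w^{old}$ discriminates.

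Each of the $O(\log \log n)$ primes is $O(\log \log n)$ bits, so $w^{FGT}$ uses $(\log \log n)^{O(1)} = o(\log n)$ bits and the combined $w^{new}$ stays within $O(\log n)$ bits after the shift. Since the successful prime tuple is not known in advance, I would enumerate in parallel all $\ell$-tuples of $O(\log \log n)$-bit primes; the total count is $2^{(\log \log n)^{O(1)}} = n^{o(1)}$, so polynomially many weight functions suffice and at least one of them realizes a valid sequence. Producing each candidate weight function from its prime tuple and from the inputs $w^{old}$ and the set of inserted edges requires only arithmetic on $O(\log n)$-bit numbers, which is routine in $\FOar$.

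The main obstacle is verifying that at least one of the enumerated weight functions is genuinely isolating, and this is where the case analysis must be done carefully. Two distinct PMs in $G^{new}$ differ by a disjoint union of alternating cycles: if every such cycle lies entirely in the old subgraph, the non-zero circulation $w^{old}$ already separates them through the low-order bits; if some cycle uses at least one new edge, then it eventually falls into a targeted family $\calC_{i+1}$ and, by Lemma~\ref{lem:crucialFGT}, at least one of the two PMs is removed from the survivor graph at that stage, so they cannot both minimize $w^{FGT}$. Closing this case split uniformly across all stages, and checking that the shift constants $B$ and $B'$ are large enough to prevent the FGT and $w^{old}$ contributions from crossing bit boundaries, is the bulk of the technical work; this parallels the reachability construction in \cite{DKMTVZ20} derived from the static construction of \cite{KT16}.
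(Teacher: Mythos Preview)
Your proposal is correct and follows essentially the same approach as the paper: layer the FGT-style weights $W_\ell$ (built from $\ell=O(\log\log n)$ small primes via Invariant~\ref{inv:analogFGT} and Lemma~\ref{lem:crucialFGT}) in the high-order bits over the static non-zero circulation $w^{old}$ in the low-order bits, enumerate all prime tuples in parallel, and conclude isolation by the same case split on whether the symmetric difference of two minimum-weight PMs contains a cycle with a real edge. One minor slip: the small-prime lemma you invoke is Fredman--Koml\'os--Szemer\'edi, not ``Fenner--Koiran--Schatz''.
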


\begin{proof}[Proof of Lemma~\ref{lem:combFGT}]
From the invariant above $G_\ell$ does not contain
any cycles. From the construction of $G_\ell$, if $G$ has a perfect matching,
 then
so does $G_\ell$ and hence it is a perfect matching. Notice that $W_\ell$
is obtained from $p_1,\ldots,p_\ell$ that include $O((\log{\log{n}})^2) = o(\log{n})$ 
many bits. Thus there are (sub)polynomially many such weighting functions $W_\ell$, depending on the primes $\vec{p}$.
Let $w = B\cdot W_\ell + w^{old}$ where we recall that $W_\ell(e)$ is non-zero only
for the new (real) edges and $w^{old}$ is non-zero only for the old (fictitious)
edges. Thus, any perfect matching that consists of only old edges is lighter
than any perfect matching containing at least one new edge. Moreover, if 
the real edges in two matchings differ, then from the construction of 
$W_\ell$ (for some choice of $\vec{p}$) both matchings cannot be lightest 
as $W_\ell$ real isolates a matching. Thus the only remaining case is
that we have two distinct lightest perfect matchings, which differ only in the
old edges. But the symmetric difference of any two such perfect matchings
is a collection of cycles consisting of old edges. But each cycle has a
non-zero circulation in the old graph and so we can obtain a matching
of even lesser weight by replacing the edges of one of the matchings in one
cycle by the edges of the other one. This contradicts that both matchings were
of least weight. This completes the proof.
\end{proof}
\subsection{Maintaining the Determinant and Inverse of a Matrix}
We will need the following results derived 
explicitly or implicitly from \cite{HealyV}:
\begin{lemma}[\cite{HealyV}]\label{lem:healyViola} 
Let $f_1,\ldots,f_\ell,g,h \in \Fbb_2[x]$ each of degree bounded by
$d = n^{O(1)}$ where $\ell = \log^{O(1)}{n}$ and let $m = n^{O(1)}$.
Assume the constant term $h(0) = 1$.
Then each of the following is in $\FOpar$:
\begin{enumerate}
\item The product $f_1\cdot f_2\cdot\ldots\cdot f_\ell$
\item The power polynomial $g^m$
\item The (unique) polynomial $\tilde{h} \in \Fbb_2[x]$ of degree at 
most $m$ such that $\tilde{h} \approx_m h^{-1}$.
\end{enumerate}
\end{lemma}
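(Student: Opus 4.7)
The plan is to handle the three parts in the order listed, reducing (2) and (3) to (1) via elementary manipulations specific to characteristic two, and invoking the technically difficult iterated-product step as a black box.

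Part (1) is the core. The coefficient of $x^k$ in $\prod_{i=1}^{\ell} f_i$ equals
\[
\bigoplus_{(d_1,\ldots,d_\ell) \,:\, \sum_i d_i = k} \prod_{i=1}^{\ell} [x^{d_i}] f_i,
\]
so naively iterating over tuples of degrees would touch $d^{\ell-1} = n^{\omega(1)}$ summands once $\ell$ is super-constant. This is the main obstacle. The route I would take is to cite the Healy--Viola result that iterated multiplication of $\log^{O(1)} n$ many polynomials over $\Fbb_2$ of polynomially bounded degree lies in $\ACzt$; morally, their proof expresses the target coefficient as the parity of a family of lattice-walk counts whose total size is bounded by a polynomial, and then collapses the mod-$2$ count into a polynomial-size depth-$O(1)$ $\oplus$-circuit. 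I would treat this reduction as a black box and not attempt to reprove it.

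Part (2) will follow from part (1) together with the Frobenius identity $g(x)^{2^i} = g(x^{2^i})$, which holds in any commutative ring of characteristic two. Writing $m = \sum_{i=0}^{\lceil \log m\rceil - 1} m_i 2^i$ in binary gives $g^m = \prod_{i \,:\, m_i = 1} g(x^{2^i})$, a product of at most $\lceil \log m\rceil = O(\log n)$ explicit polynomials. The coefficient of $x^j$ in $g(x^{2^i})$ equals $[x^{j/2^i}] g$ when $2^i \mid j$ and is $0$ otherwise, so each factor is constructible in $\ACz$. Applying part (1) to this product, which is well within the $\ell = \log^{O(1)} n$ regime, yields $g^m$ in $\FOpar$.

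Part (3) would use the formal geometric series. Since $h(0) = 1$, write $h = 1 + r$ with $r \in x \cdot \Fbb_2[x]$. Over $\Fbb_2$, $h^{-1} = \sum_{k \geq 0} (-r)^k = \sum_{k \geq 0} r^k$, and since $r^k \in x^k \cdot \Fbb_2[[x]]$, the truncation $\tilde h := \sum_{k=0}^{m} r^k$ is an $m$-approximation of $h^{-1}$. Each $r^k$ is in the scope of part (2), and then $[x^j]\tilde h = \bigoplus_{k=0}^{m} [x^j] r^k$ is a single parity quantifier over $m+1 = n^{O(1)}$ precomputed bits for each target degree $j \leq m$. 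Hence $\tilde h$ is constructible in $\FOpar$ whenever parts (1) and (2) are; as flagged, part (1) is the only ingredient that genuinely requires external heavy machinery, and the remainder of the argument is a short and uniform composition of $\ACz$ coefficient manipulations with iterated parity.
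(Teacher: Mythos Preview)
Your proposal is correct. For parts (1) and (3) it tracks the paper closely: both treat the polylogarithmic iterated product over $\Fbb_2[x]$ as a black box from Healy--Viola, and both handle the inverse via the truncated geometric series $\tilde h = \sum_{k=0}^m (h-1)^k$, computing each summand and then taking a coefficient-wise parity.

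The genuine difference is in part (2). The paper simply invokes the Healy--Viola theorem once more, now for an iterated product of $m = n^{O(1)}$ copies of $g$ in the quotient ring $\Fbb_2[x]/(x^{dm+1})$; this uses the full strength of their result (polynomially many factors). You instead exploit the Frobenius identity $g(x)^{2^i} = g(x^{2^i})$ to rewrite $g^m = \prod_{i:m_i=1} g(x^{2^i})$ as a product of at most $\lceil \log m\rceil = O(\log n)$ explicitly $\ACz$-constructible polynomials of degree at most $dm = n^{O(1)}$, and then feed this into part (1). Your route therefore only needs the \emph{weaker} black box---iterated product of polylogarithmically many factors---and bootstraps the rest by hand using the characteristic-two structure; the paper's route is a one-line citation but leans on the stronger statement. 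Both are valid, and your Frobenius reduction is a pleasant, self-contained way to see that the polylog case already suffices.
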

\begin{proof}(Sketch) The first  part is derived from Theorem 7
of \cite{HealyV} by considering the $f_i$ and their product 
as polynomials in $\Fbb_2[x]/(x^{d\ell+1})$. Similarly, for the second
one assume $g \in \Fbb_2[x]/(x^{dm+1}$ and $g^m$ is also in the same ring
without any fear of any distinct powers getting mixed.
The last one (described in the proof of Lemma 20 from \cite{HealyV})
can be proved by considering
$h^{-1} = (1 - (1 -h))^{-1} = 1 + \sum_{i=1}^{\infty}{(1-h)^i}$
Now notice that terms like $(1-h)^i$ for $i > m$ have monomials with exponents
larger than $m$ and hence can be safely discarded. Thus, it suffices to
compute: $1+ \sum_{i=1}^{m}{(1-h)^i}$ in the ring $\Fbb_2[x]/(x^{m+1})$ which
we can from Theorem 7 of \cite{HealyV}.
\end{proof}
\subsection{Static algorithm for Bipartite Maximum Matching from Section~\ref{sec:ac0p}}\label{subsec:isaacProofApp}
The idea behind the proof of \cite{DKKM18} is that:
\begin{enumerate}
\item The most significant weight function $w'$ ensures that the cardinality
of the actual edges (i.e. edges from $G$) picked in the GPM in $G'$ equals the
cardinality of the maximum matching in $G$. This
is because the GPM would cover as many of the
$t_v$ vertices with self-loops as possible to minimize the weight that ensures
the corresponding $v$ must be covered by an actual edge.
\item The next most significant weight function $w''$ is used to ensure
that all the GPMs use the \emph{same} set of pendant edges. This is because, 
otherwise, there is an alternating path in the symmetric difference of the two 
GPMs that starts and ends at self-loops $t_u,t_v$. Then, the difference in the 
weights $w''$ of the two matchings restricted to the path is $u - v \neq 0$ 
and we can find a GPM of strictly smaller weight by replacing the edges of one 
matching with the edges of the other matching restricted to the path, 
contradicting that both matchings were the lightest GPMs.
\item The least significant weight function $w'''$ then isolates the GPM since
all min-weight GPMs are essentially perfect matchings restricted to the same
set $S$ of vertices, namely those that are not matched by the corresponding 
pendant edges and the non-zero circulation weights on $G$ ensures that these 
are isolating weights on the induced graph $G[S]$.
\end{enumerate}

\subsection{Proof from Section~\ref{sec:rank}}\label{subsec:rankApp}
\begin{proof}[Proof of Lemma \ref{lem:combineMatrices}]
Notice that the sets $R,C$ can be sorted in $\AC^0$ because computing the
position $pos_R(r)$ of an element $r \in R$ (i.e., the number of elements not 
larger than $r$) is equivalent to finding the sum of at most $\ell$ bits 
(which are zero for elements of $R$ larger than $r$ and one otherwise).

The position $pos_{\bar{R}}(r')$ of an element $r' \in \bar{R}$ (i.e. the number of
elements of $\bar{R}$ not larger than $r'$) can also be found in $\AC^0$. 
This is because we can first find the set $R(r') = \{r_i \in R: r_i < r'\}$
in $\AC^0$. Then $pos_{\bar{R}}(r') = r' - |R(r')|$ because
there are $r'$ rows with indices at most $r'$ and out of these all but
 $|R(r')|$ are in $\bar{R}$ and thus can be computed in $\AC^0$.
We can similarly compute $pos_{C}(c), pos_{\bar{C}}(c')$ for 
$c \in C$ and $c'\in \bar{C}$.

Finally given $i,j \in [n]$ the element $Y_{i,j}$ is $X^{1,1}_{pos_R(i),pos_C(j)}$
if $i \in R, j \in C$. Similarly if $i \in \bar{R}, j \in C$ then it is
$X^{2,1}_{pos_{\bar{R}}(i),pos_C(j)}$, if $i \in R, j \in \bar{C}$ then it is
$X^{1,2}_{pos_R(i),pos_{\bar{C}}(j)}$ and if $i \in \bar{R}, j \in \bar{C}$
then it is $X^{2,2}_{pos_{\bar{R}}(i),pos_{\bar{C}}(j)}$. This completes the proof.
\end{proof}
\end{document}